\documentclass[twocolumn]{autart}

\usepackage{graphicx} 
\usepackage{xcolor}
\usepackage{mathtools}
\usepackage{amsfonts,amsmath,amssymb}
\usepackage{dsfont}
\usepackage[caption=false]{subfig}
\usepackage{mathrsfs}

\usepackage{soul}

\usepackage{pgfplots}
\usetikzlibrary{plotmarks}
\pgfplotsset{compat=1.8}

\newcommand{\iu}{\mathrm{i}\mkern1mu}
\newcommand{\one}{\mathds{1}}
\newcommand{\ket}[1]{\lvert #1 \rangle}
\newcommand{\kket}[1]{\lvert #1 \rangle\rangle}
\newcommand{\bra}[1]{\langle #1 \rvert}
\newcommand{\bbra}[1]{\langle\langle #1 \rvert}
\newcommand{\ketbra}[2]{\lvert #1 \rangle \langle #2 \rvert}

\newcommand{\tr}{\mathrm{tr}}

\newcommand{\qedsymbol}{\leavevmode
  \hbox to.77778em{%
  \hfil\vrule
  \vbox to.675em{\hrule width.6em\vfil\hrule}%
  \vrule\hfil}}
\newcommand{\mathqed}{\quad\hbox{\qedsymbol}}
\DeclareRobustCommand{\qed}{%
  \ifmmode \mathqed
  \else
    \leavevmode\unskip\penalty9999 \hbox{}\nobreak\hfill
    \quad\hbox{\qedsymbol}%
  \fi
}

\usepackage{url}

\definecolor{color1}{rgb}{0.0, 0.6056031704619725, 0.9786801190138923}
\definecolor{color2}{rgb}{0.8888735440600661, 0.435649148506399, 0.2781230452972764}
\definecolor{color3}{rgb}{0.24222393333911896, 0.6432750821113586, 0.304448664188385}
\definecolor{color4}{rgb}{0.7644400000572205, 0.4441118538379669, 0.8242975473403931}
\definecolor{color5}{rgb}{0.6755439043045044, 0.5556622743606567, 0.09423444420099258}

\newcommand{\Ac}{\mathcal{A}}
\newcommand{\Bc}{\mathcal{B}}

\newcommand{\Hc}{\mathcal{H}}

\newcommand{\Kc}{\mathcal{K}}
\newcommand{\Lc}{\mathcal{L}}
\newcommand{\Mc}{\mathcal{M}}

\newcommand{\Oc}{\mathcal{O}}
\newcommand{\Pc}{\mathcal{P}}
\newcommand{\Qc}{\mathcal{Q}}

\newcommand{\Sc}{\mathcal{S}}
\newcommand{\Tc}{\mathcal{T}}
\newcommand{\Uc}{\mathcal{U}}
\newcommand{\Vc}{\mathcal{V}}
\newcommand{\Wc}{\mathcal{W}}
\newcommand{\Xc}{\mathcal{X}}

\begin{document}

\begin{frontmatter}

\title{Quantum model reduction based on Oja's flow}

\author[unipd]{Miguel Casanova}\ead{casanovame@dei.unipd.it}
\author[kyodai]{Kentaro Ohki}\ead{ohki@tokai.ac.jp}
\author[unipd,qtech]{Francesco Ticozzi}\ead{ticozzi@dei.unipd.it}

\address[unipd]{Department of Information Engineering, University of Padova, Italy}
\address[kyodai]{Department of Applied Computer Engineering, Tokai University, Japan}
\address[qtech]{QTech Center, University of Padova, Italy}

\begin{keyword}
Model reduction; Slow dynamics; Adiabatic elimination; Quantum dynamical semigroups; Oja's flow.
\end{keyword}

\begin{abstract}
We propose a novel approach to numerically derive approximate reduced dynamical models for Markovian quantum open systems without perturbative iterations, projecting the evolution to the subspace associated to their slowest degrees of freedom. The two algorithms we develop are based on Oja’s continuous-time principal component flow: the first returns the optimal reduction to the slowest decaying operator-subspace, and is extended to time-dependent dynamics, while the second one is designed to reduce the dynamics on a subspace of the system's Hilbert space, and thus preserve conditional complete positivity. The methods represent a non-perturbative alternative to well-established Adiabatic Elimination (AE) methods, and the second can be used to find noise-protected subspace codes for quantum information processing. Both are tested on a paradigmatic central spin model.
\end{abstract}

\end{frontmatter}

\section{Introduction}

The search for viable ways to effectively simulate complex quantum dynamics has arguably been the first motivation behind the current push for the development of quantum technologies \cite{Feynman82}. Such a simulation task becomes unfeasible as soon as the physical system of interest is infinite-dimensional (e.g., resonator cavities \cite{ScullyZubairy1997}), or is a large many-body system, whose dimension increases exponentially on the amount of subsystems (e.g., spin chains \cite{Lopez2021}). These scenarios, as well as the impossibility of accurately modeling the environmental degrees of freedom, have led to the development of a multitude of approaches to obtain reduced, treatable descriptions, both theoretically - mainly with a wide variety of quantum statistical master equations - and numerically \cite{breuerTheoryOpenQuantum2007,rivasOpenQuantumSystems2012,Orus2014,VerstraeteMurgCirac2008,Alicki1987-ui,algebraic-0,algebraic-1,Nurdin2014,Tamascelli2017,TanimuraKubo1989}.

Focusing on dynamics associated to a master equation in Lindblad form, a comprehensive framework for exact model reduction has
been developed leveraging Krylov  subspaces and algebraic
methods in \cite{algebraic-0,algebraic-1}.
However, the resulting conditions for the
existence of an exact reduced model may be too strong,
and forcing the reduction to be exact is too limiting whenever the original model is already approximate,
e.g., constructed from experimental data.
Approximate model reduction is often possible if the dynamics
of interest exhibit a time-scale separation, that is, if there is a significant gap in the spectrum of the dynamical generator. One could then simulate only the dynamics restricted to the slow manifold (e.g., the center manifold or a larger invariant subspace containing it), excluding faster decaying degrees of freedom. This is the goal, for instance, of Adiabatic Elimination (AE) methods \cite{Hope2000,azouitGenericAdiabaticElimination2017,finkelstein-shapiroAdiabaticEliminationSubspace2020}, which are {\em analytical} but typically require iterated approximations, exploiting a representation of the dynamics of interest as a nominal generator, which is assumed to have a known center manifold, plus a small perturbation.

In this work, we take a different approach: we develop two {\em numerical algorithms} to perform model reduction of a large yet finite-dimensional quantum system, resulting in a model that describes only the slowest degrees of freedom. The first is a direct application of Oja’s continuous-time principal component flow \cite{tsuzukiLowrankApproximatedKalmanBucy2024} to the matrix representation of the generator of a quantum dynamical semigroup (QDS): the reduced model we obtain is a linear one, defined on the subspace of the slowest decaying {\em operator} of the original system. The Oja flow, which we recall in detail later, is a matrix-valued dynamical system, defined in a suitable manifold of fixed dimensions, which asymptotically converges to a basis for the principal components of the original model to which it is applied. This algorithm provides a reduction on the slow subspace, making it equivalent to an infinite-order adiabatic elimination. The method is then extended to obtain reductions of time-dependent generators. The second algorithm is obtained as a structured modification of the Oja-flow method, in which the particular matrix structure we impose guarantees that the reduced model remains physically admissible, that is, the generator remains conditional complete positive (CCP) \cite{wolfQuantumChannelsOperations2012}. It does so by forcing the reduction to select the most relevant, slowest decaying subspace of the underlying {\em vector} space. This second method is shown to be equivalent to finding the best (according to a particular cost function) approximate unitarily evolving subspace of a given dimension within the original dynamics.

 With respect to existing AE methods, our method is numerical in nature and does not assume any specific perturbation-dependent form of the generator, and thus no iterative approximation procedure is required. The approach is thus systematic and has a quadratic advantage, in terms of memory resources needed for the computations, with respect to direct diagonalization/Jordan decomposition.
 Furthermore, our method is well adapted to sparse matrix
 representations of the dynamical generator, significantly
 reducing the computational requirements of the algorithm,
 whereas direct diagonalization may require the
 use of dense matrices
 whenever any of the eigenvalues is degenerate.
 In addition, the second method is able to retain the physical character of the evolution, as the resulting reduced semigroup is guaranteed to be completely positive and trace preserving (CPTP). This is a particularly desirable feature, as it allows for direct interpretation and potential implementation on an open quantum simulator. In contrast, it is known that it is not always possible to perform positivity-preserving AE \cite{tokiedaCompletePositivityViolation2024}.

The paper is organized as follows. Section \ref{sec:model} presents the class of models of interest, quantum open Markovian dynamics and their generators. Section \ref{sec:motiv} presents the motivating applications: model reduction to the slow manifold and the quest for approximate decoherence-free subspaces. The general Oja flow framework is recalled in Section \ref{sec:oja}, while our first method is developed in Sections \ref{sec:ojaL} and \ref{sec:timedep}, with the latter dedicated to the time-dependent case. The second method, where we can ensure the physical CPTP character of the dynamics, is constructed in Sections \ref{sec:lindblad_conds} and \ref{sec:posflow}, including a discussion and comparison with the adiabatic method.
To test the validity of our method, in Section \ref{sec:simulations} we consider a central spin model where a $1/2$-spin is coupled to a bath of dissipative $1/2$-spins by $XX$ and $ZZ$ interactions.

\section{Quantum Markovian dynamics}\label{sec:model}

We here provide a brief introduction to the quantum models of interest in this work.
The mathematical description of a quantum system is built on a Hilbert space $\Hc,$ which we here consider to be potentially large but {\em finite-dimensional}.
The elements of these Hilbert spaces are represented as ``kets'' $\ket{\psi} \in \Hc$,
and the elements of the dual (linear functionals) as ``bras'' $\bra{\psi} \in \Hc^*$. For finite-dimensional
systems, kets are isomorphic to column vectors in $\mathbb{C}^n$, and bras are to row vectors
equal to the conjugate-transpose of the ket, $\bra{\psi} = \ket{\psi}^\dagger$. 

The state of an open quantum system is then described by a density matrix $\rho \in \mathfrak{D}(\Hc) \subset \Bc(\Hc)$,
which is Hermitian and positive-semidefinite,
and has trace equal to $1$. Here, $\mathfrak{D}(\Hc)$ denotes the (convex) set of all density matrices, and $\Bc(\Hc)$ the space
of bounded linear operators acting on $\Hc$. 
The density matrices are the noncommutative analogous to probability distributions.
Measurable quantities are described by Hermitian operators $O \in \mathfrak{h}(\Hc) \subset \Bc(\Hc)$, called observables.
The expected value of an observable can then be computed as $\mathbb{E}_{\rho}[O] = \tr(O \rho)$.
Here, $\mathfrak{h}(\Hc)$ stands for the set of all Hermitian (bounded linear) operators acting on $\Hc$.
Observables are analogous to random variables.

General time-homogeneous Markovian open quantum dynamics are associated to quantum dynamical semigroups (QDS) $\{\Phi_t\}_t$, a one-parameter family of continuous, completely positive (CP),
and trace preserving (TP) maps\footnote{A linear map $\Phi: \Bc(\Hc) \to \Bc(\Hc)$ is said to be CP if
$\Phi \otimes \one_n: \Bc(\Hc) \otimes \mathbb{C}^{n \times n} \to \Bc(\Hc) \otimes \mathbb{C}^{n \times n}$ is positive for all $n \geq 1$,
and it is TP if $\tr(\Phi(\rho)) = \tr(\rho)$ \cite{nielsenQuantumComputationQuantum2012}.} acting on density matrices such that $\Phi_0 = \one$ and $\Phi_t \Phi_s = \Phi_{t+s}$.
The CPTP property ensures that probabilities remain positive and normalized, even if the system is initially entangled with its environment \cite{nielsenQuantumComputationQuantum2012}.

The generator of a QDS $\Lc$ is a superoperator such that $\Phi_t = e^{\Lc t}$, and is commonly called
Gorini-Kossakowski-Sudarshan-Lindblad (QKSL) generator \cite{alickiQuantumDynamicalSemigroups2007}, or simply {\em Lindbladian}. Explicitly, the associated differential equation can be written as
\begin{equation}
\label{eq:lindblad}
\dot \rho = \Lc(\rho) = -\iu [H_{\rm nh}, \rho] + \sum_m L_m \rho L_m^\dagger,
\end{equation}
where $[A, B] = AB - BA$ is called the commutator of $A$ and $B$,
and $H_{\rm nh} = H - \frac{\iu}{2} \sum_m L_m^\dagger L_m$.
The operator $H$ is the Hamiltonian of the system, which
describes the internal energy of the system. The operators $L_m$ are called dissipation or noise operators,
 and describe the interaction with a Markovian environment.

It is possible to prove that a map $\Lc$ is a Lindbladian if and only if it satisfies the following three conditions \cite{evansQuantumSymmetriesOperator1998}:
\begin{enumerate}
\item {\em Hermiticity:} $\Lc(\rho^\dagger) = \Lc(\rho)^\dagger$.
\item {\em Trace preservation:} $\Lc^\dagger(\one) = 0$.
\item {\em Conditional complete positivity:} The $\Bc(\Hc) \times \Bc(\Hc) \to \Bc(\Hc)$ kernel: \[(\sigma, \tau) \mapsto \Lc(\sigma^\dagger \rho \tau) + \sigma^\dagger \Lc(\rho) \tau - \sigma^\dagger \Lc(\rho \tau) - \Lc(\sigma^\dagger \rho) \tau\] is positive definite, where $\rho = \rho^\dagger \geq 0$.
\end{enumerate}
Conditional complete positivity (CCP) is the differential equivalent to CP. In fact, one can prove that $\Lc$
is CCP if and only if $e^{\Lc t}$ is CP for all $t \geq 0$.

Being a semigroup of contractions \cite{alickiQuantumDynamicalSemigroups2007}, the spectrum of any QDS generators is composed of either real or complex-conjugate pairs
of eigenvalues that lie on the left half of the complex plane, including the imaginary axis, i.e.
$\mathrm{Re}(\lambda_i) \leq 0$. In fact, if all the eigenvalues lie on the imaginary axis, then we
have that dynamics are purely Hamiltonian. Otherwise, we have $L_m \neq 0$.
We also know that there is at least one eigenvalue equal to zero
and that the zero eigenvalues are non-defective, i.e., their arithmetic multiplicity does not exceed
the geometric multiplicity.

When control actions are present, they are typically modeled as time-dependent terms in the Hamiltonian $H(t) = H_0 + \sum_i u_i(t) H_i$,
where $u_i(t)$ are the input signals. The resulting Lindbladian $\Lc_t$ is no longer the generator of
a QDS $\{\Phi_t\}_t$ as described above, but of a two-parameter semigroup $\{\Phi_{t_f, t_0}\}_{t_f, t_0}$,
where $\Phi_{t_f, t_0} = \Tc e^{\int_{t_0}^{t_f} \Lc_t dt}$, and $\Tc e$ is the time-ordered exponential.
Notice that despite the time dependence, it still is the generator of Markovian dynamics, since there is
no dependence on the past history of the state of the system.

Since $\Lc$ is a linear super-operator, we can find a matrix representation by column-wise vectorization,
i.e., stacking vertically the columns of $\rho$ into a single column vector ${\rm vec}(\rho)$, of the GKLS form.
\begin{equation}
\label{eq:veclindblad}
\kket{\dot \rho} = \hat \Lc \kket{\rho},
\end{equation}
where $\kket{\rho} = {\rm vec}(\rho) \in \Hc \otimes \Hc$, and
$\hat \Lc = -\iu (\one \otimes H_{\rm nh} - H_{\rm nh}^* \otimes \one) + \sum L_m^* \otimes L_m \in \Bc(\Hc \otimes \Hc)$,
which is obtained by using the property ${\rm vec}(A X B) = (B^\top \otimes A) {\rm vec}(X)$.
We use the double ket $\kket{\rho}$ and the hat notations $\hat \Lc$ in order to make the matrix representation explicit when used.
This matrix representation is  commonly used in the quantum information and control
literature and, in contrast to e.g., coherence vector representations,
allows for a direct characterization of CCP that we exploit in Sections
\ref{sec:lindblad_conds} and \ref{sec:posflow}.

\section{Slow dynamics and motivating applications}\label{sec:motiv}
\subsection{Separation of time scales for linear dynamics}
Consider a linear, time-invariant dynamical system with a partitioned state vector $x = [x_s^\top x_f^\top]^\top$, where $x_f$ converges fast to a steady state $x_{f, \infty}$,
while $x_s$ has a much slower decay.
Then, the dynamics of $x_s$ may be approximated by means of
separation of time-scales \cite{kristiansenReviewMultipleTimeScaleDynamics2023}. The state equation of the system is the following,
\begin{equation}
\frac{d}{dt} \left[\begin{smallmatrix} x_s \\ x_f \end{smallmatrix}\right] =
\left[\begin{smallmatrix} A_{11} & A_{12} \\ A_{21} & A_{22} \end{smallmatrix}\right]
\left[\begin{smallmatrix} x_s \\ x_f \end{smallmatrix}\right].
\end{equation}

Then, one can impose the following in order to study the dynamics after $x_f$ has reached its steady state $x_{f, \infty}$
\begin{equation}
A_{21} x_s + A_{22} x_{f, \infty} = 0.
\end{equation}
This gives us the following expression for $x_{f, \infty}$
\begin{equation}
x_{f, \infty} = - A_{22}^{-1} A_{21} x_s,
\end{equation}
which can be used to find a reduced model describing the dynamics of $x_s$ only, that is, the {\em slow dynamics}:
\begin{equation}
\label{eq:timescalesep}
\dot x_s = A_{11} x_s - A_{12} A_{22}^{-1} A_{21} x_s.
\end{equation}

\subsection{Adiabatic elimination}

In dissipative open quantum systems, when there is a part of the system that decays much faster than the rest,
it is possible to simplify the analysis of the dynamics by performing \textit{adiabatic elimination} of the fast variables. The reduced dynamics will then become asymptotically exact, after an exponentially decaying error in the transient. The approach can be understood as a generalization of the \textit{separation of time-scales} technique used in classical systems, which we reviewed above, to the case where
the slow-fast subsystem decomposition is not exactly computable (e.g. the system size is too large to perform a Jordan decomposition), but a series expansion can be performed to approximate it.
There are different methods by which it can be done, but they all lead to
similar results, at least in the first few orders of the approximation.

Most AE methods, rely on performing a perturbative series
expansion in terms of a small parameter $\varepsilon$ that weighs either the fast decaying part of the system
\cite{azouitGenericAdiabaticElimination2017,rivaExplicitFormulasAdiabatic2024}
or the non-diagonal part of the QDS generator \cite{kesslerGeneralizedSchriefferWolffFormalism2012}.
More precisely, the Lindbladian $\Lc$ is assumed to be of the form $\Lc = \Lc_0 + \varepsilon \Lc_\varepsilon$,
and the knowledge about $\Lc_0$ is used to define an approximate projector $\Pc$ onto the invariant slow manifold
of the system, which is then further refined by perturbative expansion on $\varepsilon$. Alternatively,
the approximate projector $\Pc$ can be assumed to be known, while assuming no particular structure of $\Lc$
(besides the existence of a spectral gap, which is always necessary in order to have distinct slow and fast dynamics).
Then, a series expansion is used to refine the approximation of the dynamics of $\Pc \rho$
\cite{finkelstein-shapiroAdiabaticEliminationSubspace2020,saidehProjectionbasedAdiabaticElimination2020}.
All of these approaches lead to
equivalent expressions in the first few orders of the approximation (even if they may differ at higher orders), namely:
\[\Lc_{\rm eff} = \mathcal{P L P} - \mathcal{P L Q} (\mathcal{Q L Q})^{-1} \mathcal{Q L P},\] where $\Qc = \one - \Pc.$ A derivation for weakly interacting quantum systems is provided in the Appendix \ref{sec:example}.
Notice that this result is equivalent to the one obtained with the separation of time-scales method, by posing
$x_s = \Pc \rho$, $A_{11} = \mathcal{P L P}$, $A_{12} = \mathcal{P L Q}$, $A_{21} = \mathcal{Q L P}$ and $A_{22} = \mathcal{Q L Q}$.

However, notice that the effective generators $\Lc_{\rm eff}$ obtained with AE
are not QDS generators in general, especially at higher orders of the approximation. Indeed, it is a well-known
fact in the AE literature that conditional complete positivity (CCP), in particular,
cannot always be guaranteed for every expansion order \cite{tokiedaCompletePositivityViolation2024}.
But the conditions under which this is the case are not completely clear.
In Section \ref{sec:lindblad_conds}, we provide 
sufficient conditions for the preservation of CCP in the reduced model, which are equivalent to a (non-perturbative) projection of the dynamics onto a Hilbert subspace.
Then in Section \ref{sec:posflow}, we exploit this condition to derive a (suboptimal) reduced, CCP preserving flow.
Notice also that if the exact projector $\Pc$ on the slow manifold is known,
we have $A_{21} = \mathcal{Q L P} = 0$, due to the invariance of this manifold.
As such, the effective QDS generator becomes simply $\Lc_{\rm eff} = \Pc \Lc \Pc$.
In the following, we show how Oja's flow can be used to compute $\Pc$ numerically
without the need of costly operations such as diagonalization or inversion of large matrices,
which is the main obstacle in finding purely numerical AE algorithms for general (gapped)
quantum dynamical systems.

\subsection{Noise protected codes in quantum information}
In quantum information processing, the effect of uncontrolled environmental interactions are typically detrimental to the computational task, as they degrade the quantum coherences (state superpositions) that offer quantum algorithms their advantage. A wide range of error protection and correction techniques has been devised to protect suitably encoded quantum information \cite{lidarQuantumErrorCorrection2013}, including the use of noiseless codes: by exploiting symmetries in the system, one can find the degrees of freedom that are the least affected by the environment. The simplest class of such codes corresponds to noise-protected subspaces \cite{lidarDecoherenceFreeSubspacesQuantum1998}:
Consider a Hilbert space $\Hc$ that may be decomposed as $\Hc \simeq \Hc_\mathsf{q} \oplus \Hc_\mathsf{r}$.
If any state $\rho$ that is initialized with support only on $\Hc_\mathsf{q}$ evolves unitarily,
then $\Hc_\mathsf{q}$ is said to be a {\em Decoherence-Free Subspace (DFS).}

For this to happen, $\Hc_\mathsf{q}$ must support an invariant subspace for the generator $\Lc,$ and the reduction $\Lc|_{\Hc_\mathsf{q}}$ must be Hamiltonian, that is, have a purely imaginary spectrum.
One can then, in principle, identify DFS's for a given $\Lc$ by finding its center manifold (subspace). However, this is a hard computational task for large systems. Furthermore, exact DFS are not found in any realistic scenarios, and in particular when a model is reconstructed from noisy data. In these situations, one is left searching for a subspace that exhibits the slowest possible decoherence, maximizing the available computational time. That is, we aim to find {\em approximate DFS} \cite{casanovaQuantumWallStates2026,wangNumericalMethodFinding2013}. Our Oja flow approach can be used to this aim, bypassing the need for computationally heavier Jordan decomposition of the generator, as we shall argue in the next Sections.

\section{Oja's flow for classical linear systems}
\label{sec:oja}

Consider a simply stable LTI dynamical system, exhibiting distinct fast and slow dynamics, with state equation
\begin{equation}
\dot x = A x,
\end{equation}
where $A \in \mathbb{C}^{m \times m}$. Since the system is simply stable, the eigenvalues of $A$ have
non-positive real part.
Let $A \psi_i = \lambda_i \psi_i$, such that
${\rm Re}(\lambda_i) \geq {\rm Re}(\lambda_{j})\ \forall i > j$.
We say that there is a {\em spectral gap} $\Delta_\ell$ after the $r$-th ordered
eigenvalue if 
$\Delta_\ell = \mathrm{Re}(\lambda_\ell) - \mathrm{Re}(\lambda_{\ell+1}) > 0$.
The degrees of freedom associated to eigenvalues $\lambda_i$
with $i > \ell$ are called fast,
whereas those with $i \leq \ell$ are called slow.
We remark that a design parameter of the proposed Oja's flow
based model reduction is precisely
the choice of $\ell$ such that $\Delta_\ell > 0$. In the remainder
of this work we assume such a gap to be present.

Oja's principal component flow \cite{ojaSimplifiedNeuronModel1982} is an iterative technique that uses a dynamical equation to find a lower-dimensional
representation of the matrix $A$.
Define the {\em Oja dynamics} as:
\begin{equation}
\label{eq:ojaflow}
\epsilon \dot V = (\one - V V^\dagger) A V,
\end{equation}
where $V \in \mathbb{C}^{n \times r}$ and where $\epsilon > 0$ is a rate-control parameter.
This parameter can be exploited  to improve the convergence of
the numerical integration schemes when $\Delta_\ell$ is small.
Choosing $\epsilon$ large enough guarantees the convergence of
Oja's flow even for the simplest Euler method. However,
one must respect the constraint that
$\Delta_\ell / \epsilon$ remains distinguishable from a numerical zero.

By simulating Oja's flow until convergence to a steady state $V_\infty \in \mathbb{C}^{n \times r}$,
we are able to obtain a lower-dimensional representation $V_\infty^\dagger A V_\infty \in \mathbb{C}^{\ell \times \ell}$
of the direct sum of the eigenspaces of $A$, associated to the eigenvalues with the greatest real parts, without resorting to a Jordan decomposition.
Classical results about Oja's flow are restricted to the case where $A$ is real and symmetric,
i.e. $ A = A^\top \in \mathbb{R}^{m \times m}$, and $V$ belongs to the real-valued Stiefel manifold ${\rm St}(\mathbb{R}^m, \ell)$,
that is the set of $V \in \mathbb{R}^{m \times \ell}$ such that $V^\top V = \one_\ell$ \cite{wei-yongyanGlobalAnalysisOjas1994}.
For such cases, Oja's flow can be derived as the Riemannian gradient ascent flow $\dot V = {\rm grad} J(V)$
on the Stiefel manifold of the function
\begin{equation}
J(V) = \tr(A V V^\top).
\end{equation}

Since $V$ is a point of the Stiefel manifold, $V V^\top$ is an orthogonal projector, and
$J(V) = \tr(V V^\top A V V^\top)$. Furthermore, since $A$ is symmetric, its eigenvalues are
all real, and its left and right eigenvectors are identical. Then it is easy to see that
$J(V)$ is maximized by choosing $V$ such that $V V^\top$ is a projector onto the eigenspaces
with the largest eigenvalues, i.e., if the columns of $V$ are linear combinations of the eigenvectors
of $A$ with the largest eigenvalues. That is why Oja's flow allows us to extract the principal components
of $A$.

If $A$ is not symmetric, the Riemannian gradient of $J(V)$
no longer
corresponds to Oja's flow. This is due to the fact that $\partial_V \tr(A V V^\top) = (A + A^\top) V$,
introducing $A^\top$ in the gradient. In other words, such a gradient flow would only allow us to find
the principal components of the symmetric part of $A$, i.e. $A_{\rm sym} = A + A^\top$. However,
if we allow $V$ to be complex-valued, i.e. $V \in {\rm St}(\mathbb{C}^{m}, \ell) \subset \mathbb{C}^{m \times \ell}$,
then we recover Oja's flow \eqref{eq:ojaflow} as the Riemannian gradient ascent flow of
\begin{equation}
J(V) = \tr(A V V^\dagger)
\end{equation}
for general square matrices $A \in \mathbb{R}^{m \times m}$.

While $J(V)$ for complex $V$ is not a real-valued functional, not even for real-valued $A$,
(and therefore we cannot talk about ``maximizing'' it),
it has recently been proven in \cite{tsuzukiGlobalConvergenceOjas2025} that the set of stable fixed points
of the flow \eqref{eq:ojaflow} is the following,
\begin{equation}
\begin{split}
V_\infty \in
\mathfrak{V} \coloneq \{&[\psi_1, \ldots, \psi_r] [K_\ell^\top, 0]^\top \in {\rm St}(\mathbb{C}^m,\ell), \\
& K_\ell \in \mathbb{C}^{\ell \times \ell}\},
\end{split}
\end{equation}
where, as before, $\psi_j$ are the ordered eigenvectors of $A$ and $\mathrm{St}(\mathbb{C}^m,\ell) \subset \mathbb{C}^{m \times \ell}$ is the Stiefel manifold. More precisely,
Oja's flow with initial point taken from almost everywhere (i.e. with the exception of a set of measure zero)
in the Stiefel manifold converges exponentially to the set $\mathfrak{V}$. The convergence rate is governed by
$(\mathrm{Re}(\lambda_{\ell+1}) - \mathrm{Re}(\lambda_{\ell}) + \delta) / \epsilon$, where $\delta > 0$ is an arbitrarily
small constant.
Furthermore, it is easy to see that the set $\mathfrak{V}$ is invariant under unitary transformations $W \in {\rm U}(\ell)$,
i.e. if $V_\infty \in \mathfrak{V}$, then for any $W \in \mathbb{C}^{\ell \times \ell}$ such that $W^\dagger W = \one$
we have that $V W \in \mathfrak{V}$.

Since $V_\infty \in \mathrm{St}(\mathbb{C}^m, \ell)$, we know that $V_\infty V_\infty^\dagger \in \mathbb{C}^{m \times m}$ is a projector.
In fact, it is the projector $\Pc$ onto the slow manifold that was described in the previous section. As such,
it satisfies $V_\infty V_\infty^\dagger \psi_i = \psi_i \ \forall i \leq \ell$. Furthermore, we also have that the spectrum
of the reduced model $V_\infty^\dagger A V_\infty$ coincides with that of the slow part of $A$, i.e.
\begin{equation}
\lambda_i(V_\infty^\dagger A V_\infty) = \lambda_i(A),\ i = 1, \ldots, \ell.
\end{equation}
If the Hermitian part of $A$, i.e. $\frac{1}{2} (A + A^\dagger)$, is positive-definite,
then integrating \eqref{eq:ojaflow} by forward Euler's method $V_{k+1} = V_k + \Delta t \dot V_k$
is enough to find $V_\infty \in \mathfrak{V}$. However, we are interested in the cases when $A$ is negative-semidefinite, instead.
Whenever $A$ is not positive-definite, one could either substitute $A$ by $A + a \one_m$
with large enough $a$, such that $A + a I > 0$, or one could apply retractions,
which are displacements maps ${\rm Retr}_V(X)$ over a manifold $\Mc$ along direction $X \in T_V \Mc$
(such as the cost-efficient QR\footnote{The QR decomposition of a square matrix $X$ is a product of two matrices $Q R$,
such that $Q^\dagger Q = \one$ and $R$ is upper-triangular. ${\rm qr}_Q(\cdot)$ stands for the $Q$ part of the decomposition.} retraction \cite{linProjectionRobustWasserstein2020},
${\rm Retr}_{V}^{\rm qr}(X) = {\rm qr}_Q(V + X) \in {\rm St(\mathbb{C}^m, \ell)}$)
to maintain the trajectory in the Stiefel manifold. The retracted forward Euler iteration is given by
$V_{k+1} = \mathrm{Retr}_{V_k}(\Delta t \dot V_k)$.
Either of the two strategies preserves the numerical stability of the flow and the attractiveness
of the set $V_\infty$.

\section{Reducing Lindbladians via Oja Flow}\label{sec:ojaL}

Having introduced the properties of Oja's flow, we now proceed to show how it can be applied to quantum systems.
In the following we consider a quantum system of dimension $n$, so that its density operator vectorized dimension is $n^2$. We aim to obtain a reduced model of dimension $\ell$ and, while not strictly required by the method, we will focus on the case $\ell=r^2$. This will allow us to carry out the analysis done in Section \ref{sec:lindblad_conds} and a
direct comparison to the CP-preserving method introduced in Section \ref{sec:posflow}. Therefore, from now on, we actually apply Oja's flow to $n^2 \times n^2$ matrices, and
we obtain $\hat \Vc_\infty \in \mathbb{C}^{n^2 \times r^2}$, where the calligraphic font also points to the
implicit squaring of the dimension clearer.
Additionally, since from this point we transition from general linear
systems to the context of quantum systems, in the following we set $A = \hat \Lc \in \mathbb{C}^{n^2 \times n^2}$,
i.e., from now on, the dynamical matrix $A$ corresponds to the matrix representation of the QDS generator $\Lc$.

Being a complex-valued matrix, a priori we do not have convergence guarantees for the Lindbladian $\hat \Lc$.
However, we next prove a result that gives us a way to unitarily transform it into a real matrix.
Define \[\hat \Sc = \sum_{ij} \ket{i}\ket{j}\bra{j}\bra{i}\] the {\bf ${\rm SWAP}$ gate} between the two copies of $\Hc,$ such that
$\hat \Sc \ket{\psi}\otimes\ket{\phi}=\ket{\phi}\otimes\ket{\psi}$ for any pair of vectors.

\begin{prop}
Let $\hat \Lc \in \Bc(\Hc \otimes \Hc)$ be the matrix representation of a Lindbladian in vectorized space. Then,
\[\hat \Ac = \hat \Sc^{1/2} \hat \Lc \hat \Sc^{{1/2}\dagger}\] is a real-valued matrix.
\end{prop}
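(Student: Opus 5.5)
The plan is to convert the Hermiticity-preservation property of $\Lc$ (condition 1 of the characterization in Section~\ref{sec:model}) into a matrix identity relating $\hat \Lc$, its entrywise complex conjugate $\hat \Lc^*$, and $\hat \Sc$. I will then fix an explicit choice of $\hat \Sc^{1/2}$ and check directly that $\hat \Ac^* = \hat \Ac$.

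\textbf{Step 1 (vectorized Hermitian conjugation).} Since $(\rho^\dagger)_{ij} = \overline{\rho_{ji}}$ and column-wise vectorization places $\rho_{ij}$ at the basis vector $\ket{j}\ket{i}$, transposing $\rho$ amounts to swapping the two tensor factors. Hence
\[
\kket{\rho^\dagger} = \hat \Sc\, \kket{\rho}^*,
\]
where $^*$ denotes entrywise conjugation.

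\textbf{Step 2 (symmetry of $\hat\Lc$).} Vectorize both sides of $\Lc(\rho^\dagger)=\Lc(\rho)^\dagger$. The left side gives $\hat\Lc \hat\Sc \kket{\rho}^*$. The right side gives $\hat\Sc(\hat\Lc\kket{\rho})^* = \hat\Sc\hat\Lc^*\kket{\rho}^*$. Since $\kket{\rho}^*$ ranges over all of $\mathbb{C}^{n^2}$, we get $\hat\Lc\hat\Sc = \hat\Sc\hat\Lc^*$. Using $\hat\Sc^2=\one$, this becomes
\[
\hat\Lc^* = \hat\Sc\hat\Lc\hat\Sc.
\]
Alternatively, one can verify this term by term from the explicit form $\hat \Lc = -\iu (\one \otimes H_{\rm nh} - H_{\rm nh}^* \otimes \one) + \sum_m L_m^* \otimes L_m$, using $\hat\Sc (X\otimes Y)\hat\Sc = Y\otimes X$.

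\textbf{Step 3 (the square root).} $\hat\Sc$ is real, symmetric and involutive, so its spectral projectors $P_\pm = (\one\pm\hat\Sc)/2$ are real. I take the principal square root
\[
\hat\Sc^{1/2} = P_+ + \iu P_- = \tfrac{1+\iu}{2}\one + \tfrac{1-\iu}{2}\hat\Sc .
\]
This matrix is unitary, with $\hat\Sc^{1/2\dagger} = P_+ - \iu P_-$. Because $P_\pm$ are real, conjugation gives $(\hat\Sc^{1/2})^* = P_+ - \iu P_- = \hat\Sc^{1/2\dagger}$. Using $\hat\Sc P_\pm = \pm P_\pm$, we also have
\[
\hat\Sc^{1/2\dagger}\hat\Sc = P_+ + \iu P_- = \hat\Sc^{1/2}, \qquad \hat\Sc\,\hat\Sc^{1/2} = P_+ - \iu P_- = \hat\Sc^{1/2\dagger}.
\]

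\textbf{Step 4 (conclusion).} Combining Steps 2 and 3,
\[
\hat\Ac^* = (\hat\Sc^{1/2})^*\,\hat\Lc^*\,(\hat\Sc^{1/2\dagger})^* = \hat\Sc^{1/2\dagger}\,\hat\Sc\hat\Lc\hat\Sc\,\hat\Sc^{1/2} = \hat\Sc^{1/2}\hat\Lc\hat\Sc^{1/2\dagger} = \hat\Ac,
\]
so $\hat\Ac$ is real-valued.

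\textbf{Main obstacle.} The delicate point is bookkeeping rather than anything conceptual. One must get the vectorization convention right in Step 1, so that the swap appears on the correct side. One must also commit to the specific branch of $\hat\Sc^{1/2}$, because the sign of $\iu$ on $P_-$ determines whether the two identities in Step 3 hold as written. With the opposite branch, the same argument goes through with the roles of $\hat\Sc^{1/2}$ and $\hat\Sc^{1/2\dagger}$ exchanged. I would therefore fix the principal root explicitly at the outset and note that the conclusion holds for either branch.
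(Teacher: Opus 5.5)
Your proof is correct and takes the same route as the paper. Both establish $\hat\Sc\hat\Lc\hat\Sc=\hat\Lc^*$ and then conjugate by a symmetric unitary square root of $\hat\Sc$; you differ only in deriving that identity from Hermiticity preservation rather than the explicit GKLS form, and in exhibiting the root $P_+ + \iu P_-$ explicitly instead of citing its existence from Horn--Johnson.

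One correction to your closing remark: the identities $\hat\Sc^{1/2\dagger}\hat\Sc=\hat\Sc^{1/2}$ and $\hat\Sc\,\hat\Sc^{1/2}=\hat\Sc^{1/2\dagger}$ hold for \emph{every} unitary square root $R$ of $\hat\Sc$, because $R^4=\hat\Sc^2=\one$, so the choice of branch is irrelevant. What the argument actually needs is that the root be symmetric, i.e.\ $(\hat\Sc^{1/2})^*=\hat\Sc^{1/2\dagger}$, and your choice satisfies this because $P_\pm$ are real.
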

\begin{pf}
Recall from \eqref{eq:veclindblad} that $\hat \Lc = -\iu (\one \otimes H_{\rm nh} - H_{\rm nh}^* \otimes \one) + \sum_m L_m^* \otimes L_m$.
The intuition behind this proof comes from noticing that if
we swap the operators acting on each partition,
we obtain the complex conjugate $\hat \Lc^*$, i.e.
\begin{equation*}
\label{eq:swapconj}
\begin{split}
\hat \Sc \hat \Lc \hat \Sc^{\dagger} =& -\iu (H_{\rm nh} \otimes \one - \one \otimes H_{\rm nh}^*) + \sum_m L_m \otimes L_m^*
= \hat \Lc^*.
\end{split}
\end{equation*}
Notice also that $\hat \Sc$ is not only unitary, but also (transpose) symmetric, i.e., $\hat \Sc = \hat \Sc^\top$.
By consequence, there exists a symmetric (and unitary) square root of
$\hat \Sc = \hat \Sc^{1/2} \hat \Sc^{1/2}$ \cite[Corollary 2.5.20]{horn2012matrix}.
Then, we can apply a unitary conjugation with $\hat \Sc^{{1/2}^\dagger}$ to
both sides of \eqref{eq:swapconj}, obtaining
\begin{equation*}
\hat \Sc^{{1/2}^\dagger} \hat \Sc \hat \Lc \hat \Sc^{\dagger} \hat \Sc^{1/2}
= \hat \Sc^{{1/2}^\dagger} \hat \Lc^* \hat \Sc^{1/2},
\end{equation*}
where $\hat \Sc^{{1/2}^\dagger} \hat \Sc = \hat \Sc^{1/2}$ and
$\hat \Sc^{{1/2}^\dagger} = \hat \Sc^{{1/2}^*}$.
Therefore, we have that both sides of the equation are the complex conjugate of the other,
i.e., that $\hat \Sc^{1/2} \hat \Lc \hat \Sc^{{1/2}^\dagger}$ is equal to its conjugate, thus proving that it is real-valued.
\qed
\end{pf}

Then the convergence of Oja's flow for Lindbladian matrices (and any complex matrix that is unitarily similar to a real matrix) is guaranteed by the following corollary.

\begin{cor}
The flow described by $\dot \Vc = (\one - \Vc \Vc^\dagger) \hat \Lc \Vc$, where $\Vc \in {\rm St}(\mathbb{C}^{n^2}, r^2)$ is the pushforward of
Oja's flow $\dot \Vc^\prime = (\one - \Vc^\prime \Vc^{\prime^\dagger}) \Ac \Vc^\prime)$,
by $\hat \Sc^{{1/2}^\dagger}$, that is  $\Vc = \hat \Sc^{{1/2}^\dagger} \Vc^\prime$,
where $\Ac = \hat \Sc^{1/2} \hat \Lc \hat \Sc^{{1/2}^\dagger} \in \mathbb{R}^{n, n}$.
\end{cor}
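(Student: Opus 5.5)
The plan is a direct change-of-variables computation. I write $U \coloneq \hat \Sc^{1/2}$, which is unitary by the previous Proposition, so that $\Ac = U \hat \Lc U^\dagger$ is real-valued. Let $\Vc'(t)$ solve Oja's flow $\dot \Vc' = (\one - \Vc'\Vc'^\dagger)\Ac \Vc'$ on ${\rm St}(\mathbb{C}^{n^2}, r^2)$, and define $\Vc(t) \coloneq U^\dagger \Vc'(t)$. I will show that $\Vc$ satisfies $\dot \Vc = (\one - \Vc\Vc^\dagger)\hat\Lc\Vc$. Since both vector fields are polynomial, hence locally Lipschitz, uniqueness of solutions then shows that every trajectory of the $\hat \Lc$-flow is of this form, with $\Vc'(0) = U \Vc(0)$.

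The first step is that the map $\Vc' \mapsto U^\dagger \Vc'$ preserves the Stiefel manifold. Indeed, $(U^\dagger\Vc')^\dagger(U^\dagger \Vc') = \Vc'^\dagger U U^\dagger \Vc' = \Vc'^\dagger\Vc' = \one_{r^2}$. It is a linear diffeomorphism, so its differential is again multiplication by $U^\dagger$, and the pushforward of the vector field $F(\Vc') = (\one - \Vc'\Vc'^\dagger)\Ac\Vc'$ is $\Vc \mapsto U^\dagger F(U\Vc)$.

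The second step is the computation itself. Inserting $\Vc' = U\Vc$ and $\Ac = U\hat\Lc U^\dagger$ gives
\[
U^\dagger F(U\Vc) = U^\dagger(\one - U\Vc\Vc^\dagger U^\dagger) U \hat\Lc U^\dagger U \Vc = (\one - \Vc\Vc^\dagger)\hat\Lc \Vc,
\]
using only $U^\dagger U = UU^\dagger = \one$. The same identity holds with the rate constant $\epsilon$ present, since the factor $1/\epsilon$ passes through unchanged.

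No real obstacle is expected; the only delicate point is interpretive. The conclusion to draw is that the convergence results of \cite{tsuzukiGlobalConvergenceOjas2025} for the real matrix $\Ac$ transfer to $\hat\Lc$. The relevant facts are as follows.
\begin{itemize}
\item Eigenvectors map as $\psi_i \mapsto U^\dagger\psi_i$, with the same eigenvalues and therefore the same ordering and spectral gap $\Delta_\ell$.
\item Measure-zero exceptional sets and exponential convergence rates are preserved under the unitary isometry.
\item The stable set $\mathfrak{V}$ for $\Ac$ is mapped onto the corresponding set for $\hat\Lc$.
\end{itemize}
I would state these consequences briefly after the computation. I would also note that the statement's $\mathbb{R}^{n,n}$ should read $\mathbb{R}^{n^2\times n^2}$.
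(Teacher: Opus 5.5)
Your proposal is correct and is essentially the paper's argument. Both substitute $\hat \Lc = \hat \Sc^{{1/2}^\dagger} \hat \Ac \hat \Sc^{1/2}$ into the vector field and cancel using only the unitarity of $\hat \Sc^{1/2}$; the sole difference is direction (the paper pulls back from the $\hat \Lc$-flow, you push forward from the $\hat \Ac$-flow), and your extra remarks on Stiefel invariance, uniqueness and the $\mathbb{R}^{n,n}$ versus $\mathbb{R}^{n^2 \times n^2}$ typo are correct but not needed for the identity itself.
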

\begin{pf}
We start from
\begin{equation*}
\begin{split}
\frac{d}{dt} \hat \Vc =& (\one - \hat \Vc \hat \Vc^\dagger) \hat \Lc \hat \Vc \\
=& (\one - \hat \Vc \hat \Vc^\dagger) \hat \Sc^{{1/2}^\dagger} \hat \Ac \hat \Sc^{1/2} \hat \Vc \\
=& (\hat \Sc^{{1/2}^\dagger} - \hat \Vc \hat \Vc^{\prime^\dagger}) \hat \Ac \hat \Vc^\prime.
\end{split}
\end{equation*}

Then, the pullback $\frac{d}{dt} \hat \Vc^\prime = \hat \Sc^{1/2} \frac{d}{dt} \hat \Vc$ is given by
\begin{equation*}
\begin{split}
\frac{d}{dt} \hat \Vc^\prime =& \hat \Sc^{1/2} (\hat \Sc^{{1/2}^\dagger} - \hat \Vc \hat \Vc^{\prime^\dagger}) \hat \Ac \hat \Vc^\prime \\
=& (\one - \hat \Vc^\prime \hat \Vc^{\prime^\dagger}) \hat \Ac \hat \Vc^\prime,
\end{split}
\end{equation*}
which is the Oja flow for a real-valued matrix $\hat \Ac$.
\qed
\end{pf}
Therefore, given a quantum system whose generator $\Lc$
exhibits a spectral gap between eigenvalues $\lambda_{r^2}$ and $\lambda_{r^2+1}$, we can apply
Oja's flow, in order to perform principal component analysis and extract the $r$ slowest degrees of freedom.
Then we obtain the following linear generator,
\begin{equation}
\hat \Lc_{\Vc_{\infty}} = \hat \Vc_\infty^\dagger \hat \Lc \hat \Vc_\infty \in \mathbb{C}^{r^2 \times r^2}.
\end{equation}
Exploiting this reduced model, the expectation values of observables of interest $O$ can be obtained by
\begin{equation}
\mathbb{E}_{\rho(t)}[O] = \bbra{O_{\Vc_\infty}} e^{\hat \Lc_{\Vc_\infty} t} \kket{\rho_{0_{\Vc_\infty}}},
\end{equation}
where $\rho_{0_{\Vc_\infty}}$ is the initial state of the reduced system and $\bbra{O_{\Vc_\infty}} = {\rm vec}(O)^\dagger \Vc_\infty$.

{\em Remark}: If we have an initial state $\rho_0$, such that
$\Pc \rho_0 = \rho_0$, where $\Pc = \Vc_\infty \Vc_\infty^\dagger$,
then
the simulation results are exact (to numerical precision).
Namely, since $\Vc_\infty$ is a steady-state of the Oja flow,
it is easy to verify that
$\|e^{\Lc t} \Pc - \Vc_\infty e^{\Lc_{\Vc_\infty}} \Vc_\infty^\dagger\|_\mathrm{sop} = 0$,
where $\|\cdot\|_\mathrm{sop}$ is any superoperator norm.
If $\Pc \rho_0 \neq \rho_0$, the error due to the projection decays
exponentially and is upper bounded by
$C e^{{\rm Re}(\lambda_{r^2+1}) t}$ for some $C > 0$.
The exact simulation property is the main advantage of
our method over AE. In order to achieve an exact simulation by
means of AE, one would need to compute a high order
perturbative series, which requires taking the inverse
of large $(n^2 - r^2) \times (n^2 - r^2)$ dense matrices,
even if $\hat \Lc$ is sparse.
Our method can be implemented using
in-place matrix operations only, limiting the total
memory requirement to that of:
one $n^2 \times n^2$ sparse matrix for $\hat \Lc$,
two $n^2 \times r^2$ dense matrices for $\hat \Vc$
and its derivative,
and one $r^2 \times r^2$ dense matrix for intermediate results.
Finally, the total number of floating-point operations required
per iteration of the integrator is of order
$\Oc(N^{\hat \Lc}_\mathrm{nz} r^2 + (n r^2)^2 + (n r)^2)$, where
$N^{\hat \Lc}_\mathrm{nz}$ is the amount of nonzero entries in $\hat \Lc$.

\section{Reduction of time-dependent generators}\label{sec:timedep}

Up to this point, we have only considered systems without an input.
Quantum control systems are typically bilinear control systems,
and the inputs are modeled by the inclusion of time-dependent
terms in the Hamiltonian.
In the simplest case, a controlled quantum system
has a generator of the form $\Lc(t) = \Lc_0 + u(t) \Lc_c$,
where $u(t)$ is the control input.
Since bilinearity is preserved in the reduced model
$\dot \rho = \Lc_\Vc(t) \rho$, one could take a static approach to model reduction
by computing $\hat \Vc$ for $\Lc_0$ and then
use $\Lc_\Vc(t) = \Lc_{0, \Vc} + u(t) \Lc_{c, \Vc}$.

However, this approach does not take into account the
time dependence of the eigenvalues $\lambda_i(t)$ of $\Lc(t)$,
nor that of the associated eigenspaces, which $\Vc$ is intended
to capture.
Motivated by deriving better reduced models for such controlled systems, we next discuss an extension of our
approach to the model reduction of time-dependent systems.

Following a scheme similar to that given in \cite{tranningerDetectabilityConditionsState2022} for the
reduced QR decomposition, and assuming that
$\Delta_{r^2} = \mathrm{Re}(\lambda_{r^2}(t)) - \mathrm{Re}(\lambda_{r^2+1}(t)) \geq c > 0,\ \forall t > t_0$,
we start by discretizing the time variable of the Lindbladian,
$t \in \{t_i\}_{i=0}^{f}$, and
approximate $\hat \Lc(t)$ with piecewise constant reduction matrix
$\hat \Vc_i$ in each interval $[t_i, t_{i+1})$ for $0 \leq i < f$.
Namely, let
\begin{equation}
\Lc_i = \Lc(t_i),
\end{equation}
then each interval $[t_i, t_{i+1}])$,
we update the reduced model performing Oja's flow for $A = \hat \Lc_i$,
and setting the initial value of the corresponding reduction matrix $\hat \Vc_i$ equal to the previous steady point
$\hat \Vc_{i-1, \infty}$, i.e.
\begin{equation}
\begin{split}
\frac{d}{dt} \hat \Vc_i =& (\one - \hat \Vc_i \hat \Vc_i^\dagger) \hat \Lc_i \hat \Vc_i^\dagger, \\
\hat \Vc_i(0) =& \hat \Vc_{i-1, \infty}.
\end{split}
\end{equation}
Then, each of the reduced models are given by
\begin{equation}
\hat \Lc_{\Vc_{i,\infty}} = \hat \Vc_{i, \infty}^\dagger \hat \Lc \hat \Vc_{i, \infty}.
\end{equation}

{\em Remark:} The time-discretization of the Lindbladian
does not need to match that of the state, i.e. the change of model
$\Lc_i$ does not need to be performed at every step of the simulation, but could be limited to a subset of such steps.
The same reduction matrix $\hat \Vc_{i, \infty}$ may be kept, for instance,
until the input signal $u(t)$ crosses certain threshold levels.

We also need to preserve the continuity of the state along the different changes of model.
To this end, we employ the following transition matrix $\hat \Wc_i$, which maps a state in the $i-1$-th model
to its equivalent representation in the $i$-th model.
\begin{equation}
\hat \Wc_i = \hat \Vc_{i, \infty}^\dagger \hat \Vc_{i-1, \infty}.
\end{equation}
However, depending on how the eigenvectors of $\Lc(t)$ evolve,
the transition matrices $\hat \Wc_i$ may not be trace preserving.
In fact, the following proposition tells us that updating the matrix reduction
matrix $\hat \Vc_{i, \infty}$ is useful only when the transition matrix $\hat \Wc_i$ is non-unitary.
Otherwise, the subspace that $\hat \Vc_{i+1, \infty}$ projects onto is identical to that of $\hat \Vc_{i, \infty}$.
\begin{prop}
The following three statements are equivalent:
\begin{enumerate}
\item The transition matrix $\hat \Wc_i$ is unitary.
\item The singular angles\footnote{The singular angles between $X, Y \in {\rm St}(\mathbb{C}^n, r)$ are given by arccosine of the singular values of $X^\dagger Y$. They are the angles between the subspaces spanned by the columns of $X$ and $Y$.} between $\hat \Vc_{i, \infty}$ and $\hat \Vc_{i+1, \infty}$ are all equal to zero.
\item The projectors $\hat \Vc_{i, \infty} \hat \Vc_{i, \infty}^\dagger$ and $\hat \Vc_{i+1, \infty} \hat \Vc_{i+1, \infty}^\dagger$ are equal to each other.
\end{enumerate}
\end{prop}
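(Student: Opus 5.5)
Write $X = \hat\Vc_{i,\infty}$ and $Y = \hat\Vc_{i+1,\infty}$. Both are points of ${\rm St}(\mathbb{C}^{n^2}, r^2)$, so $X^\dagger X = Y^\dagger Y = \one_{r^2}$, and the relevant transition matrix is the square matrix $M = Y^\dagger X \in \mathbb{C}^{r^2\times r^2}$ (or its adjoint; the three statements are symmetric under $X\leftrightarrow Y$, so the index convention is immaterial). The plan is to prove the cycle $(1)\Leftrightarrow(2)$ by a singular-value argument and $(1)\Leftrightarrow(3)$ by a projector/norm argument.

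\emph{Step 1: $(1)\Leftrightarrow(2)$.} First note that $\|M\|\le \|Y^\dagger\|\,\|X\|=1$, so all singular values $\sigma_k$ of $M$ lie in $[0,1]$ and the singular angles $\theta_k=\arccos\sigma_k$ are well defined. The square matrix $M$ is unitary iff $M^\dagger M=\one$, i.e. iff all its singular values equal $1$. This in turn holds iff all $\theta_k=0$, and by the footnote's definition these are exactly the singular angles between $X$ and $Y$.

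\emph{Step 2: $(3)\Rightarrow(1)$.} If $XX^\dagger = YY^\dagger$, compute
\[
M^\dagger M = X^\dagger Y Y^\dagger X = X^\dagger X X^\dagger X = \one .
\]

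\emph{Step 3: $(1)\Rightarrow(3)$, the main step.} Suppose $M^\dagger M = X^\dagger YY^\dagger X=\one$. Let $P_Y = YY^\dagger$, an orthogonal projector. For each column $x_k$ of $X$,
\[
\|P_Y x_k\|^2 = (X^\dagger P_Y X)_{kk} = 1 = \|x_k\|^2 ,
\]
which forces $x_k\in\operatorname{range}(P_Y)$, because $\|x\|^2=\|P_Yx\|^2+\|(\one-P_Y)x\|^2$. Hence $\operatorname{range}X\subseteq\operatorname{range}Y$. Both ranges have dimension $r^2$, so they coincide. Orthogonal projectors are determined by their ranges, so $XX^\dagger=YY^\dagger$. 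Equivalently, $Y = XM^\dagger$ with $M^\dagger$ unitary, and substituting gives $YY^\dagger = XM^\dagger M X^\dagger = XX^\dagger$.

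The only nontrivial point is Step 3, where one must pass from the column-wise norm equality to subspace equality and use equality of dimensions. The rest is routine linear algebra.
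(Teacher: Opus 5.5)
Your proof is correct and follows essentially the paper's route. The paper runs the cycle $(2)\Rightarrow(1)\Rightarrow(3)\Rightarrow(2)$. For $(1)\Rightarrow(3)$ it rewrites unitarity of the transition matrix as $YY^\dagger XX^\dagger YY^\dagger = YY^\dagger$ in your notation, then asserts without proof that two orthogonal projectors of equal rank satisfying this must coincide. Your Pythagoras-plus-dimension-count argument in Step 3 is exactly the justification that assertion needs, with the roles of $X$ and $Y$ swapped. Your direct computation for $(3)\Rightarrow(1)$ replaces the paper's remark that equal subspaces have zero singular angles.
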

\begin{pf}
$(2) \Rightarrow (1)$ is automatic from the fact that $\hat \Vc_{i, \infty}^\dagger \hat \Vc_{i+1, \infty} = \hat \Wc_i$ would have all of its
singular values equal to one.
$(1) \Rightarrow (3)$ comes from the following calculation,
\begin{equation*}
\begin{split}
\hat \Wc_i^\dagger \hat \Wc_i = \hat \Vc_{i+1, \infty}^\dagger \hat \Vc_{i, \infty} \hat \Vc_{i, \infty}^\dagger \hat \Vc_{i+1, \infty} =& \one \Rightarrow \\
\hat \Vc_{i+1, \infty} \hat \Vc_{i+1, \infty}^\dagger \hat \Vc_{i, \infty} \hat \Vc_{i, \infty}^\dagger \hat \Vc_{i+1, \infty} =& \hat \Vc_{i+1, \infty} \Rightarrow \\
\hat \Vc_{i+1, \infty} \hat \Vc_{i+1, \infty}^\dagger \hat \Vc_{i, \infty} \hat \Vc_{i, \infty}^\dagger \hat \Vc_{i+1, \infty} \hat \Vc_{i+1, \infty}^\dagger =& \hat \Vc_{i+1, \infty} \hat \Vc_{i+1, \infty}^\dagger.
\end{split}
\end{equation*}
Then, since both $\hat \Vc_{i, \infty} \hat \Vc_{i, \infty}^\dagger$ and $\hat \Vc_{i+1, \infty} \hat \Vc_{i+1, \infty}^\dagger$ are projectors of the same rank, the previous
equation implies that they are equal.
$(3) \Rightarrow (2)$ is automatic as well. Since the subspaces are equal, the singular angles between them must be zero.
\qed
\end{pf}
Therefore, we also need to renormalize the state before continuing with the simulation.
Then, the reduced dynamics of the system in each interval $[t_{i},t_{i+1})$ are given by
\begin{equation}
\begin{split}
\frac{d}{dt} \kket{\rho_{\Vc_{i, \infty}}} =& \hat \Lc_{\Vc_{i, \infty}} \kket{\rho_{i_{\Vc_{i, \infty}}}}, \\
\kket{\rho_{\Vc_{i, \infty}}(t_i^+)} =& \frac{\hat \Wc_i \kket{\rho_{\Vc_{i-1, \infty}}(t_i^-)}}{\sqrt{r} \bra{\omega_r} \hat \Wc_i \kket{\rho_{\Vc_{i-1, \infty}}(t_i^-)}},
\end{split}
\end{equation}
where $\tr({\rm vec}^{-1}(\cdot)) = r \bra{\omega_r} \cdot$ and $\bra{\omega_r} = \frac{1}{\sqrt{r}} \sum_i \bra{i} \bra{i} = \frac{1}{\sqrt{r}} \bbra{\one}$.
Furthermore, if the time discretization of the Lindbladian coincides with
that of the state, then the evolution of the reduced model is given by
\begin{equation}
\label{eq:timedepsim}
\begin{split}
\kket{\rho_{\Vc_{i+1, \infty}}&(t_{i+1})} = \\
&\frac{\hat \Wc_{i+1}
e^{\hat \Lc_{\Vc_{i, \infty}} (t_{i+1} - t_i)}
\kket{\rho_{\Vc_{i, \infty}}(t_i)}}{
\sqrt{r} \bra{\omega_r} \hat \Wc_{i+1}
e^{\hat \Lc_{\Vc_{i, \infty}} (t_{i+1} - t_i)}
\kket{\rho_{\Vc_{i, \infty}}(t_i)}
}.
\end{split}
\end{equation}

{\em Remark}: We can study the error due to the difference between
$\Lc_i$ and $\Lc(t)|_{t \in [t_i, t_{i+1})}$ and due to the change
of model at time $t_{i+1}$. Let the step size of the simulation be
$\Delta t$, and let also $\Delta \Lc_i(t) = \Lc(t) - \Lc_i$.
In the following we assume $\Delta t$ to be small and
$\Lc(t)$ to be (approximately) constant within each $\Delta t$ interval.
These assumptions allow for the computation of the error
between simulations of the full and the reduced models, avoiding the need for introducing the time-ordered exponential integral
$\Tc e^{\int_{t_0}^{t_f} \Lc(\tau) d\tau},$ and simplifying the computation. The final assumption is that $\Pc_i \rho = \rho$ for
the particular step in which we compute the error, where
$\Pc_i = \Vc_{i, \infty} \Vc_{i, \infty}^\dagger$.
For a simulation step, we have that the error is upper bounded as
\begin{equation}
\label{eq:errorii}
\begin{split}
\|e^{\Lc(t) \Delta t} \Pc_i -&
\Vc_{i, \infty} e^{\Lc_{\Vc_{i, \infty}}(t) \Delta t} \Vc_{i, \infty}^\dagger\|_\mathrm{sop} \\
&\leq \Delta t \|\Qc_i \Delta \Lc_i(t) \Pc_i\|_\mathrm{sop}
+ \Oc(\Delta t^2).
\end{split}
\end{equation}
where $\Qc_i = \one - \Pc_i$.
The bound was computed by Taylor expansion of the exponential,
and the fact that $\Vc_{i, \infty}$ is a steady-state of the Oja flow.
We observe that the error \eqref{eq:errorii} is dominated
by $\Delta t$ and the leakage out of the support of $\Pc_i$
caused by $\Delta \Lc_i(t)$. If the input signal 
is continuous, then the leakage can be made arbitrarily
small by reducing $t_{i+1} - t_i$.
Furthermore, if $u(t)|_{t \to \infty} \to 0$ (as it is the case for example in stabilization problems)
then we know that given $\varepsilon$ there is a $t_i$
such that $\|\Delta \Lc_i(t)\| < \varepsilon,\, \forall t > t_i$.\\
Regarding the choice of the model update intervals, reducing $t_{i+1} - t_i$ not only increases
the computational demands of the simulation, but it also increases
the amount of times that we incur in an additional error
given by
\begin{equation}
\label{eq:errorij}
\|\Pc_i - \Vc_{i+1, \infty} \Wc_{i+1} \Vc_{i, \infty}^\dagger\|_\mathrm{sop} =
\|\Qc_{i+1} \Pc_{i}\|_\mathrm{sop}.
\end{equation}
However, faster switching implies a better ability to track the actual invariant subspaces and less error due to ``leakage'' outside the subspace.
For
continuous inputs, $\Qc_{i+1} \Pc_i \to 0$ as $t_{i+1} - t_i \to 0$
as long as $\Delta_{r^2} > 0,\, \forall t > t_0$.
Finally, if $\Lc(t)|_{t \to \infty}$ exists and admits a unique
stationary state
 then it is easy to see that the reduced simulation
 with $\Vc_\infty|_{u(t) = 0}$ is asymptotically exact.

\section{Ensuring CPTP dynamics}
\label{sec:lindblad_conds}

In this section we shall use the notation $\hat \Sc^{(n)} \in {\rm U}(n^2)$
to denote the ${\rm SWAP}$ gate, in order to make the dimension of the system it acts on explicit.
As such, $\hat \Sc^{(n)}$ acts on the original system, whereas $\hat \Sc^{(r)}$ acts on the
reduced system.
Let us also define the maximally entangled state $\ket{\omega} = \frac{1}{\sqrt{n}} \sum_i \ket{i}\ket{i} = \frac{1}{\sqrt{n}} \kket{\one}$
and the projectors $\Pi_\omega = \ketbra{\omega}{\omega}$ and $\Pi^\perp_\omega = \one - \Pi_\omega$.
Additionally, let us define the {\em $\Gamma$-involution} of $\hat \Xc = \sum_{ijkl} X_{ijkl} \ketbra{ij}{kl}$ by \[\hat \Xc^\Gamma = \sum_{ijkl} X_{ijkl} \ketbra{i}{k} \otimes \ketbra{j}{l}.\]

Then we can translate the three conditions for conditional CP that we recalled in Section \ref{sec:model} from superoperator to matrix form:
\begin{enumerate}
\item Hermicity \cite{zyczkowski2004duality}: $\hat \Sc^{(n)} \hat \Lc \hat \Sc^{(n)} = \hat \Lc^*$.
\item Trace preservation: $\hat \Lc^\dagger \ket{\omega} = 0$.
\item Conditional complete positivity \cite{wolfAssessingNonMarkovianDynamics2008}: $\Pi_\omega^\perp \hat \Lc^\Gamma \Pi_\omega^\perp \geq 0$.
\end{enumerate}

With these conditions in hand, we can now study what properties ${\hat \Vc}$ needs to satisfy so that
$T(\cdot) = {\hat \Vc}^\dagger \cdot {\hat \Vc}$ preserves them. The following proposition gives sufficient
conditions for Hermicity preservation.

\begin{prop}
\label{thm:hermicity}
Given $\hat \Lc$ such that $\hat \Sc^{(n)} \hat \Lc \hat \Sc^{(n)} = \hat \Lc^*$. This property is preserved by $T(\hat \Lc) = {\hat \Vc}^\dagger \hat \Lc {\hat \Vc}$ if
\begin{equation}
\hat \Sc^{(n)} \hat \Vc \hat \Sc^{(r)} = {\hat \Vc}^*.
\end{equation}

\end{prop}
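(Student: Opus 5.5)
The argument is a direct matrix calculation. It uses only two facts about the SWAP gate. First, $\hat \Sc^{(k)}$ is real-valued, so $\hat \Sc^{(k)*} = \hat \Sc^{(k)}$. Second, it is a unitary involution, $(\hat \Sc^{(k)})^2 = \one$ and $\hat \Sc^{(k)\dagger} = \hat \Sc^{(k)}$, for both $k=n$ and $k=r$. We must show that the reduced generator $\hat \Lc_\Vc = \hat \Vc^\dagger \hat \Lc \hat \Vc$ satisfies the Hermiticity condition at the reduced dimension, $\hat \Sc^{(r)} \hat \Lc_\Vc \hat \Sc^{(r)} = \hat \Lc_\Vc^*$.

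First, rewrite the hypothesis on $\hat \Vc$ in the form needed. Multiplying $\hat \Sc^{(n)} \hat \Vc \hat \Sc^{(r)} = \hat \Vc^*$ on the left by $\hat \Sc^{(n)}$ and using the involution property gives
\begin{equation*}
\hat \Vc \hat \Sc^{(r)} = \hat \Sc^{(n)} \hat \Vc^*.
\end{equation*}
Taking the conjugate transpose, and using $\hat \Sc^{(k)\dagger} = \hat \Sc^{(k)}$, gives
\begin{equation*}
\hat \Sc^{(r)} \hat \Vc^\dagger = \hat \Vc^{\top} \hat \Sc^{(n)},
\end{equation*}
where $\hat \Vc^\top = (\hat \Vc^*)^\dagger$.

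Next, substitute both identities:
\begin{equation*}
\hat \Sc^{(r)} \hat \Vc^\dagger \hat \Lc \hat \Vc \hat \Sc^{(r)} = \hat \Vc^\top \hat \Sc^{(n)} \hat \Lc \hat \Sc^{(n)} \hat \Vc^* = \hat \Vc^\top \hat \Lc^* \hat \Vc^*.
\end{equation*}
The second equality is the Hermiticity hypothesis on $\hat \Lc$. The right-hand side equals $(\hat \Vc^\dagger \hat \Lc \hat \Vc)^*$, because entrywise conjugation is multiplicative and $(\hat \Vc^\dagger)^* = \hat \Vc^\top$. This is exactly $\hat \Lc_\Vc^*$, which proves the claim.

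There is no real obstacle here. The only point needing care is keeping the two SWAP gates distinct: $\hat \Sc^{(n)}$ acts on the $n^2$-dimensional side and $\hat \Sc^{(r)}$ on the $r^2$-dimensional side. One should also check that the adjoint of the rewritten hypothesis places $\hat \Vc^\top$, not $\hat \Vc^\dagger$, next to $\hat \Sc^{(n)}$. I would also remark that the argument never uses $\hat \Vc$ lying on the Stiefel manifold, so the condition is sufficient for any $\hat \Vc \in \mathbb{C}^{n^2\times r^2}$.
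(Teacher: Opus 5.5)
Your proof is correct and follows essentially the paper's route: the same direct matrix computation using only that the SWAP gates are Hermitian involutions, together with the hypothesis $\hat \Sc^{(n)} \hat \Lc \hat \Sc^{(n)} = \hat \Lc^*$. The only cosmetic difference is that the paper inserts $\hat \Sc^{(n)} \hat \Sc^{(n)} = \one$ and regroups the product as $(\hat \Sc^{(n)} \hat \Vc \hat \Sc^{(r)})^\dagger (\hat \Sc^{(n)} \hat \Lc \hat \Sc^{(n)}) (\hat \Sc^{(n)} \hat \Vc \hat \Sc^{(r)})$, whereas you first rearrange the hypothesis into $\hat \Vc \hat \Sc^{(r)} = \hat \Sc^{(n)} \hat \Vc^*$ and its adjoint and then substitute.
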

\begin{pf}

We need that
\begin{equation*}
\hat \Sc^{(r)} ({\hat \Vc}^\dagger \hat \Lc {\hat \Vc}) \hat \Sc^{(r)} = ({\hat \Vc}^\dagger \hat \Lc {\hat \Vc})^*.
\end{equation*}

By using the facts that $\hat \Sc^2 = \one$ and $\hat \Sc^\dagger = \hat \Sc$, and the assumption that $\hat \Sc^{(n)} \hat \Lc \hat \Sc^{(n)} = \hat \Lc^*$, we have
\begin{equation*}
\begin{split}
\hat \Sc^{(r)} ({\hat \Vc}^\dagger \hat \Lc {\hat \Vc}) \hat \Sc^{(r)} =& (\hat \Sc^{(r)} {\hat \Vc}^\dagger \hat \Sc^{(n)}) (\hat \Sc^{(n)} \hat \Lc \hat \Sc^{(n)}) (\hat \Sc^{(n)} \hat \Vc \hat \Sc^{(r)}) \\
=& (\hat \Sc^{(n)} \hat \Vc \hat \Sc^{(r)})^\dagger \hat \Lc^* (\hat \Sc^{(n)} \hat \Vc \hat \Sc^{(r)}),
\end{split}
\end{equation*}
which is equal to $({\hat \Vc}^\dagger \hat \Lc {\hat \Vc})^*$ whenever $\hat \Sc^{(n)} \hat \Vc \hat \Sc^{(r)} = \hat \Vc^*$.
\qed

\end{pf}

The following proposition proves that such a Hermicity-preserving $\hat \Vc$ always exists and can be analytically computed, given
any steady state of the Oja flow.

\begin{prop}
\label{thm:symunitary}
Given a steady state $\hat \Vc$ of Oja's flow of the Lindbladian matrix $\hat \Lc$,
the operator $\hat \Wc = \sqrt{{\hat \Vc}^\dagger \hat \Sc^{(n)} {\hat \Vc}^* \hat \Sc^{(r)}}$ is unitary and $\hat \Vc \hat \Wc$ preserves Hermicity.
\end{prop}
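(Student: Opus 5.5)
The plan is to reduce everything to a property of the unitary $\hat \Mc := \hat \Vc^\dagger \hat \Sc^{(n)} \hat \Vc^* \hat \Sc^{(r)}$, and then show that its square root inherits the needed symmetry through functional calculus.

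\textbf{Step 1: the slow subspace is invariant under $x \mapsto \hat \Sc^{(n)} x^*$.} Since $\hat \Sc^{(n)} \hat \Lc \hat \Sc^{(n)} = \hat \Lc^*$, a (generalized) eigenvector $\psi$ of $\hat \Lc$ with eigenvalue $\lambda$ is sent to a (generalized) eigenvector $\hat \Sc^{(n)} \psi^*$ with eigenvalue $\lambda^*$. Conjugation does not change $\mathrm{Re}(\lambda)$, and the gap $\Delta_{r^2}>0$ does not split a complex-conjugate pair. Hence the span of the slow eigenvectors is mapped onto itself. Because $\hat \Vc$ is a steady state of Oja's flow, its columns span exactly this subspace. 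It follows that $\hat \Sc^{(n)} \hat \Vc^* = \hat \Vc \hat U$ with $\hat U = \hat \Vc^\dagger \hat \Sc^{(n)} \hat \Vc^*$.

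\textbf{Step 2: $\hat \Mc$ and $\hat \Wc$ are unitary.} Both $\hat \Vc$ and $\hat \Sc^{(n)}\hat \Vc^*$ have orthonormal columns, so $\hat U$ is unitary. Therefore $\hat \Mc = \hat U \hat \Sc^{(r)}$ is unitary as well. A unitary is normal with unimodular spectrum, so its principal square root, defined by the spectral calculus, is again unitary. This proves the first claim.

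\textbf{Step 3: reduce Hermicity preservation to a condition on $\hat \Wc$.} By Proposition \ref{thm:hermicity}, it suffices to check $\hat \Sc^{(n)} \hat \Vc \hat \Wc \hat \Sc^{(r)} = \hat \Vc^* \hat \Wc^*$. Substitute $\hat \Vc^* = \hat \Sc^{(n)} \hat \Vc \hat U$ and $\hat U = \hat \Mc \hat \Sc^{(r)}$, then cancel $\hat \Sc^{(n)}\hat \Vc$ on the left, which is legitimate since $\hat \Vc^\dagger \hat \Vc = \one$. The condition becomes $\hat \Wc \hat \Sc^{(r)} = \hat \Mc \hat \Sc^{(r)} \hat \Wc^*$. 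Using $\hat \Mc = \hat \Wc^2$ and invertibility of $\hat \Wc$, this is equivalent to
\[
\hat \Sc^{(r)} \hat \Wc^* \hat \Sc^{(r)} = \hat \Wc^{-1}.
\]

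\textbf{Step 4: verify the identity in Step 3.} Define the antilinear algebra automorphism $\varphi(X) = \hat \Sc^{(r)} X^* \hat \Sc^{(r)}$. A direct computation using $\hat \Sc = \hat \Sc^\top = \hat \Sc^\dagger$ gives
\[
\varphi(\hat \Mc) = \hat \Sc^{(r)} \hat \Vc^\top \hat \Sc^{(n)} \hat \Vc = \hat \Mc^\dagger = \hat \Mc^{-1}.
\]
Since $\varphi$ is multiplicative, $\varphi(\hat \Wc)^2 = \varphi(\hat \Mc)$, and $\varphi$ maps each spectral projector of $\hat \Mc$ at eigenvalue $\mu$ to the spectral projector of $\varphi(\hat \Mc)$ at eigenvalue $\mu^*$. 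So $\varphi(\hat \Wc)$ is obtained by applying $z \mapsto (\sqrt{z})^*$ on the spectrum of $\hat \Mc$. On the unit circle away from $-1$, the principal root satisfies $(\sqrt{z})^* = \sqrt{z^*} = \sqrt{z^{-1}} = (\sqrt{z})^{-1}$. Hence $\varphi(\hat \Wc) = \hat \Wc^{-1}$, as required.

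\textbf{Main obstacle.} The delicate point is the branch cut when $-1$ lies in the spectrum of $\hat \Mc$. There the principal root gives $\sqrt{-1} = \iu$, but $(\iu)^* = -\iu \neq (\iu)^{-1}$ is false only in appearance: in fact $(\iu)^{-1} = -\iu$, so the scalar identity $(\sqrt{z})^* = (\sqrt{z})^{-1}$ still holds at $z=-1$. More generally, for any unimodular choice of root $w$ one has $w^* = w^{-1}$, so the identity holds on the whole unit circle provided the same root is chosen at $\mu$ and at $\mu^*$. I would therefore state the lemma for any square root defined by a spectral function $f$ with $f(z)^2 = z$ and $|f(z)| = 1$ on the unit circle. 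Then $\varphi(f(\hat \Mc)) = f(\hat \Mc)^{-1}$ reduces to the scalar identity above, provided that $\varphi$ intertwines the functional calculus, i.e. $\varphi(f(\hat \Mc)) = \bar f(\varphi(\hat \Mc))$ with $\bar f(z) := f(z^*)^*$. Checking this intertwining carefully is the step I expect to require the most care.
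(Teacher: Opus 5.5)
Your argument is correct, and it proves more than the paper's own proof does.

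\textbf{How the paper argues.} The paper obtains the key invariance dynamically. It notes that $V \mapsto \hat\Sc^{(n)} V^* \hat\Sc^{(r)}$ maps stable steady states of the Oja flow to stable steady states. It then invokes the characterization of the attracting set $\mathfrak{V}$ to conclude that $\hat\Vc$ and $\hat\Sc^{(n)}\hat\Vc^*\hat\Sc^{(r)}$ define the same projector. Sandwiching that identity between $\hat\Vc^\dagger$ and $\hat\Vc$ gives $\hat\Wc^2\hat\Wc^{2\dagger}=\one$. The paper stops there and never checks that $\hat\Vc\hat\Wc$ satisfies the condition of Proposition~\ref{thm:hermicity}.

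\textbf{How your route differs.} Your Step 1 reaches the same invariance from the spectral symmetry $\lambda\mapsto\lambda^*$ of $\hat\Lc$, together with the fact that the gap cannot split a conjugate pair. This is elementary and avoids the global convergence theorem. Your Steps 3--4 then supply the Hermicity half of the statement, which the paper omits. They also make visible that $\varphi(\hat\Mc)=\hat\Mc^\dagger$ holds for any isometry $\hat\Vc$. Hence the steady-state property enters only through the range invariance of Step 1, equivalently through unitarity of $\hat\Mc$.

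\textbf{The step you flag as delicate is immediate.} Since $\varphi(\hat\Mc)=\hat\Mc^{-1}$, it has the same spectral projectors $P_\mu$ as $\hat\Mc$. Because $\varphi$ sends the projector at $\mu$ to the projector of $\varphi(\hat\Mc)$ at $\mu^*$, you get $\varphi(P_\mu)=P_\mu$. Therefore $\varphi(\hat\Wc)=\sum_\mu f(\mu)^*P_\mu$, which equals $\hat\Wc^{-1}=\sum_\mu f(\mu)^{-1}P_\mu$ as soon as $|f(\mu)|=1$. No matching of branches at $\mu$ and $\mu^*$ is needed, and $-1$ needs no special treatment.

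\textbf{A caveat you share with the paper.} The claim ``its columns span exactly this subspace'' holds only for the attracting steady states in $\mathfrak{V}$. A general fixed point of the Oja flow spans an arbitrary $r^2$-dimensional invariant subspace of $\hat\Lc$. Such a subspace need not be closed under $x\mapsto\hat\Sc^{(n)}x^*$, and then $\hat\Mc$ is not unitary. The paper tacitly uses the same restriction when it argues ``if $\hat\Vc$ is stable'', so this is not a flaw specific to your proof. Still, you should state $\hat\Vc\in\mathfrak{V}$ explicitly as a hypothesis.
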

\begin{pf}
$\hat \Lc$ satisfies $\hat \Sc^{(n)} \hat \Lc \hat \Sc^{(n)} = \hat \Lc^*$, then applying
the properties $\hat \Sc^2 = \one$ and $\hat \Sc = \hat \Sc^* = \hat \Sc^\top$, and taking the complex conjugate,
we have that
\begin{equation*}
\begin{split}
&(\one - {\hat \Vc} {\hat \Vc}^\dagger) \hat \Lc V = 0 \Rightarrow \\
&(\one - (\hat \Sc^{(n)} \hat \Vc \hat \Sc^{(r)})^* (\hat \Sc^{(n)} \hat \Vc \hat \Sc^{(r)})^\top) \hat \Lc (\hat \Sc^{(n)} \hat \Vc \hat \Sc^{(r)})^* = 0,
\end{split}
\end{equation*}
i.e. if ${\hat \Vc}$ is a steady state, then also $\hat \Sc^{(n)} {\hat \Vc}^* \hat \Sc^{(r)}$ is one. Furthermore, by linearity
$\frac{d}{dt} (\hat \Sc^{(n)} {\hat \Vc}(t)^* \hat \Sc^{(r)}) = \hat \Sc^{(n)} (\frac{d}{dt} {\hat \Vc}(t))^* \hat \Sc^{(r)}$. Therefore, if ${\hat \Vc}$ is stable, then
also $\hat \Sc^{(n)} {\hat \Vc}^* \hat \Sc^{(r)}$ is stable. Then, we conclude that $\hat \Sc^{(n)} {\hat \Vc}^* \hat \Sc^{(r)} \in \mathfrak{V}$ \cite{tsuzukiGlobalConvergenceOjas2025} and
${\hat \Vc} {\hat \Vc}^\dagger = (\hat \Sc^{(n)} {\hat \Vc} \hat \Sc^{(r)})^* (\hat \Sc^{(n)} {\hat \Vc} \hat \Sc^{(r)})^\top$. Finally, multiplying by ${\hat \Vc}^\dagger$ from the left and by ${\hat \Vc}$ from
the right, we have $\one = {\hat \Vc}^\dagger (\hat \Sc^{(n)} {\hat \Vc} \hat \Sc^{(r)})^* (\hat \Sc^{(n)} {\hat \Vc} \hat \Sc^{(r)})^\top {\hat \Vc} = \hat \Wc^2 \hat \Wc^{2^\dagger}$.
\qed
\end{pf}

The following proposition, in turn, shows that trace-preservation can also be guaranteed, given any Hermicity preserving $\hat \Vc \hat \Wc$.

\begin{prop}
Let $\hat \Vc$ be a steady state of Oja's flow and $\{\ket{\phi_i}\}_{i=1}^{r^2}$ an orthonormal basis of $\mathbb{C}^{r^2}$ with
$\ket{\phi_1} = \ket{\omega_r} = \frac{1}{\sqrt{r}} \kket{\one}$. Let $\{\ket{\psi_i}\}_{i=1}^r$
be another orthonormal basis of $\mathbb{C}^{r^2}$ such that $\ket{\psi_1}$ is a right-eigenvector of
$\hat \Lc_{\Vc}^\dagger$ with eigenvalue equal to zero, i.e. $\hat \Vc^\dagger \hat \Lc^\dagger \hat \Vc^\dagger \ket{\psi_1} = 0$.
Then $\hat \Wc = \sum_i \ketbra{\psi_i}{\phi_i}$ is unitary and $\hat \Lc_{\Vc \Wc}$ is trace preserving.
Furthermore, let $\hat \Vc^\prime = \hat \Vc \hat \Wc \sqrt{(\hat \Vc \hat \Wc)^\dagger (\hat \Sc^{(n)} \hat \Vc \hat \Wc \hat \Sc^{(r)})^*}$, then $\hat \Lc_{\Vc^\prime}$
is both Hermitian and trace preserving.
\end{prop}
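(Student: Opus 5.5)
The plan has two parts: first build the trace-preserving rotation $\hat\Wc$, then show that the Hermiticity correction of Proposition~\ref{thm:symunitary} leaves $\ket{\omega_r}$ fixed, so trace preservation survives it. I read the kernel condition as $\hat\Lc_\Vc^\dagger\ket{\psi_1}=\hat\Vc^\dagger\hat\Lc^\dagger\hat\Vc\ket{\psi_1}=0$.

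\textbf{Step 1: $\hat\Wc$ is unitary.} $\hat\Wc=\sum_i\ketbra{\psi_i}{\phi_i}$ sends one orthonormal basis of $\mathbb{C}^{r^2}$ onto another, so $\hat\Wc^\dagger\hat\Wc=\hat\Wc\hat\Wc^\dagger=\one$.

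\textbf{Step 2: a canonical choice of $\ket{\psi_1}$.} I will use the steady-state identity $(\one-\hat\Vc\hat\Vc^\dagger)\hat\Lc\hat\Vc=0$. It says $\hat\Lc\hat\Vc=\hat\Vc\hat\Lc_\Vc$, and taking adjoints gives $\hat\Vc^\dagger\hat\Lc^\dagger=\hat\Lc_\Vc^\dagger\hat\Vc^\dagger$. Since trace preservation of $\hat\Lc$ gives $\hat\Lc^\dagger\ket{\omega}=0$, we get
\[
\hat\Lc_\Vc^\dagger\hat\Vc^\dagger\ket{\omega}=\hat\Vc^\dagger\hat\Lc^\dagger\ket{\omega}=0 .
\]
I therefore set $\ket{\psi_1}=\hat\Vc^\dagger\ket{\omega}/c$ with $c=\|\hat\Vc^\dagger\ket{\omega}\|$, and complete it to an orthonormal basis $\{\ket{\psi_i}\}$.

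The point to check is $c>0$. I expect to argue this from the fact that the slow subspace $\mathrm{range}(\hat\Vc)$ contains a stationary state $\kket{\rho_\infty}$ (the eigenvalue $0$ is among the slowest), and $\langle\omega\kket{\rho_\infty}=\tr(\rho_\infty)/\sqrt n\neq0$. Hence $\hat\Vc^\dagger\ket{\omega}\neq 0$.

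\textbf{Step 3: trace preservation of $\hat\Lc_{\Vc\Wc}$.} Note $\hat\Lc_{\Vc\Wc}=\hat\Wc^\dagger\hat\Lc_\Vc\hat\Wc$ and $\hat\Wc\ket{\omega_r}=\ket{\psi_1}$. Therefore
\[
\hat\Lc_{\Vc\Wc}^\dagger\ket{\omega_r}=\hat\Wc^\dagger\hat\Lc_\Vc^\dagger\ket{\psi_1}=0 .
\]

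\textbf{Step 4: the Hermiticity correction preserves trace preservation.} Let $\hat\Uc=\hat\Vc\hat\Wc$. This is again an Oja steady state spanning the same projector $\Pc=\hat\Vc\hat\Vc^\dagger$. Write $M=\hat\Uc^\dagger(\hat\Sc^{(n)}\hat\Uc\hat\Sc^{(r)})^*$. By Proposition~\ref{thm:symunitary}, $\sqrt{M}$ is unitary and $\hat\Vc'=\hat\Uc\sqrt{M}$ yields a Hermiticity-preserving reduction. It remains to keep trace preservation, and the key claim is $M\ket{\omega_r}=\ket{\omega_r}$.

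\begin{itemize}
\item First, $\hat\Uc\ket{\omega_r}=\hat\Vc\ket{\psi_1}=\Pc\ket{\omega}/c$.
\item Second, $\ket{\omega_r}$ is real and swap-invariant. Since $\hat\Sc^{(r)}$ is real, $\hat\Sc^{(n)}\hat\Uc^*\hat\Sc^{(r)}\ket{\omega_r}=\hat\Sc^{(n)}(\hat\Uc\ket{\omega_r})^*=\hat\Sc^{(n)}\Pc^*\ket{\omega}/c$.
\item Third, the proof of Proposition~\ref{thm:symunitary} shows $\Pc=(\hat\Sc^{(n)}\hat\Vc\hat\Sc^{(r)})^*(\hat\Sc^{(n)}\hat\Vc\hat\Sc^{(r)})^\top$, i.e.\ $\hat\Sc^{(n)}\Pc^*\hat\Sc^{(n)}=\Pc$. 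Since $\ket{\omega}$ is real and swap-invariant, this gives $\hat\Sc^{(n)}\Pc^*\ket{\omega}=\Pc\ket{\omega}$.
\item Combining, $M\ket{\omega_r}=\hat\Uc^\dagger\Pc\ket{\omega}/c=\hat\Uc^\dagger\hat\Uc\ket{\omega_r}=\ket{\omega_r}$.
\end{itemize}

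So $\ket{\omega_r}$ is an eigenvector of the unitary $M$ with eigenvalue $1$. For the principal square root, $\sqrt M\ket{\omega_r}=\ket{\omega_r}$; for a general square root I would pick the branch with $\sqrt1=1$. Then
\[
\hat\Lc_{\Vc'}^\dagger\ket{\omega_r}=\sqrt M^\dagger\hat\Lc_{\Uc}^\dagger\sqrt M\ket{\omega_r}=\sqrt M^\dagger\hat\Lc_{\Uc}^\dagger\ket{\omega_r}=0
\]
by Step 3. Hermiticity of $\hat\Lc_{\Vc'}$ comes from Proposition~\ref{thm:hermicity} applied through Proposition~\ref{thm:symunitary}.

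\textbf{Main obstacle.} I expect Step 4 to be the hardest part: proving that the Hermiticity correction does not destroy trace preservation. This is why $\ket{\psi_1}$ must be chosen canonically as $\hat\Vc^\dagger\ket{\omega}/c$. If it were an arbitrary vector in a degenerate kernel of $\hat\Lc_\Vc^\dagger$, the eigenvector property of $M$ could fail. A secondary delicate point is the positivity $c>0$ in Step 2.
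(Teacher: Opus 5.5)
Your argument is correct, and its skeleton is the paper's: unitarity of $\hat\Wc$, trace preservation from $\hat\Wc\ket{\omega_r}=\ket{\psi_1}$, and then showing that $\ket{\omega_r}$ is an eigenvector of the unitary $M$, so that $\sqrt{M}$ keeps it in $\ker\hat\Lc_{\Vc\Wc}^\dagger$. Where you differ is in how you justify the key identity. The paper keeps $\ket{\psi_1}$ arbitrary and asserts $\hat\Sc\hat\Vc\ket{\psi_1}=c_1\hat\Vc^*\ket{\psi_1^*}$ with $|c_1|=1$, on the grounds that $\hat\Vc\ket{\psi_1}$ is a non-defective eigenvector of $\hat\Lc^\dagger$. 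That premise is not established, since ${\rm range}(\hat\Vc)$ is $\hat\Lc$-invariant, not $\hat\Lc^\dagger$-invariant. What is actually needed is only that ${\rm vec}^{-1}(\hat\Vc\ket{\psi_1})$ be Hermitian up to a phase. Your choice $\ket{\psi_1}=\hat\Vc^\dagger\ket{\omega}/c$ delivers exactly that, with eigenvalue $1$, and is the cleaner route. It rests on the intertwining $\hat\Vc^\dagger\hat\Lc^\dagger=\hat\Lc_\Vc^\dagger\hat\Vc^\dagger$ together with $\hat\Sc^{(n)}\Pc^*\hat\Sc^{(n)}=\Pc$. Your argument that $c>0$ uses that $\hat\Vc$ is the stable (slow) steady state; Proposition~\ref{thm:symunitary} needs that anyway, so this is fine.

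The cost of your choice is generality, so you should add one remark. If $\ker\hat\Lc_\Vc^\dagger$ is one-dimensional, every admissible $\ket{\psi_1}$ equals $e^{i\alpha}\hat\Vc^\dagger\ket{\omega}/c$. Your computation then gives $M\ket{\omega_r}=e^{-2i\alpha}\ket{\omega_r}$, because $v\mapsto\hat\Sc^{(n)}v^*$ is antilinear. So the conclusion holds for every choice, and this phase is precisely the paper's $c_1$. If the kernel is degenerate, your suspicion is right: the statement as written fails, so your canonical choice is a needed hypothesis rather than a gap. Here is a counterexample.
\begin{itemize}
\item Take qubit dephasing $\Lc(\rho)=\gamma(Z\rho Z-\rho)$ with $r=n=2$ and the trivial steady state $\hat\Vc=\one$.
\item Let $e_1,\dots,e_4$ be the normalized vectorizations of $\one,Z,X,Y$, and set $\ket{\phi_j}=e_j$.
\item Choose $\ket{\psi_1}=(e_1+ie_2)/\sqrt2$, $\ket{\psi_2}=\tfrac12(e_1-ie_2)+\tfrac1{\sqrt2}e_3$, $\ket{\psi_3}=\tfrac{i}{2}(e_1-ie_2)-\tfrac{i}{\sqrt2}e_3$ and $\ket{\psi_4}=e_4$.
\end{itemize}
Then $M$ has spectrum $\{1,1,e^{\pm2\pi i/3}\}$. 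The resulting $\hat\Vc'$ is real in this basis, so Hermiticity is preserved. However, $\hat\Vc'\ket{\omega_r}=\kket{\tfrac23\one-\tfrac1{3\sqrt2}X}$, which $\Lc^\dagger$ does not annihilate. Hence $\hat\Lc_{\Vc'}$ is not trace preserving.
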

\begin{pf}
It is easy to see that $\hat \Wc$ is unitary, due to the orthonormality of the two bases, then
\begin{equation*}
\begin{split}
\hat \Wc^\dagger \hat \Wc =& \sum_{ij} \ketbra{\phi_i}{\psi_i} \ketbra{\psi_j}{\phi_j}
= \sum_{ij} \delta_{ij} \ketbra{\phi_i}{\phi_j} = \one.
\end{split}
\end{equation*}
It is also easy to see that $\hat \Lc_{\Vc \Wc}$ is trace preserving, since $\hat \Wc$ maps $\ket{\omega}$ to the
kernel of $\hat \Lc_{\Vc}^\dagger$,
\begin{equation*}
\begin{split}
\hat \Lc_{\Vc \Wc}^\dagger \ket{\omega} =& (\hat \Wc^\dagger \hat \Vc^\dagger \hat \Lc \hat \Vc \hat \Wc)^\dagger \ket{\omega} \\
=& \hat \Wc^\dagger (\hat \Vc^\dagger \hat \Lc \hat \Vc)^\dagger \hat \Wc \ket{\omega}
= \hat \Wc^\dagger \hat \Lc_{\Vc}^\dagger \ket{\psi_1} = 0
\end{split}
\end{equation*}
Finally, in order to prove the last point, we have to show that $\ket{\omega}$ is an eigenvector of the unitary
$\hat \Wc^\prime = \sqrt{(\hat \Vc \hat \Wc)^\dagger (\hat \Sc^{(n)} \hat \Vc \hat \Wc \hat \Sc^{(r)})^*}$.
From Proposition \ref{thm:symunitary} we know that
$\hat \Vc^\dagger (\hat \Sc^{(n)} \hat \Vc \hat \Sc^{(r)})^* = \hat \Vc^\dagger \hat \Sc^{(n)} \hat \Vc^* \hat \Sc^{(r)}$
is unitary. Therefore, multiplying by $\hat \Sc^{(r)}$ from the right (and taking the complex conjugate),
we have that $\hat \Uc = \hat \Vc^\top \hat \Sc^{(n)} \hat \Vc$ is also unitary.
Furthermore, $\hat \Vc \ket{\psi_1}$, being a non-defective right-eigenvector of $\hat \Lc^\dagger$ with real eigenvalue, satisfies
$\hat \Sc \hat \Vc \ket{\psi_1} = c_1 \hat \Vc^* \ket{\psi_1^*}$, where $|c_i| = 1$ is due to the non-uniqueness of $\hat \Vc$.
Applying these properties, we have
\begin{equation*}
\begin{split}
\hat \Wc^\prime =& (\hat \Wc^\dagger \hat \Uc^\dagger \hat \Wc^* \hat \Sc^{(r)})^{1/2} \\
=& (\ketbra{\phi_1}{\psi_1} \hat \Vc^\dagger \hat \Sc^{(n)} \hat \Vc^* \ketbra{\psi_1^*}{\phi_1^*} \hat \Sc^{(r)} + \\
&\ \sum_{ij>1} \ketbra{\phi_i}{\psi_i} \hat \Uc^\dagger \ketbra{\psi_j^*}{\phi_j^*} \hat \Sc^{(r)})^{1/2}
\end{split}
\end{equation*}
\begin{equation*}
\begin{split}
=& (c_1 \ketbra{\omega}{\psi_1} \hat \Vc^\dagger \hat \Vc \ketbra{\psi_1}{\omega} + \\
&\ \sum_{ij>1} \ketbra{\phi_i}{\psi^\prime_i} \hat \Uc^\dagger \ketbra{\psi_j^{\prime^*}}{\phi_j^*} \hat \Sc^{(r)})^{1/2} \\
=& (c_1 \ketbra{\omega}{\omega} +
\sum_{ij>1} \ketbra{\phi_i}{\psi^\prime_i} \hat \Uc^\dagger \ketbra{\psi_j^{\prime^*}}{\phi_j^*} \hat \Sc^{(r)})^{1/2},
\end{split}
\end{equation*}
where we have also used $\ket{\psi_1^*} = \ket{\psi_1} = \ket{\omega}$, $\hat \Sc^{(r)} \ket{\omega} = \ket{\omega}$
and $\hat \Vc^\dagger \hat \Vc = \one$. Furthermore, by unitarity, we know that the rest of the sum runs over $i,j > 1$,
allowing us to conclude that $\ket{\omega}$ is an eigenvector of $\hat \Wc^\prime$ with eigenvalue $\sqrt{c_1}$. Then, finally,
we have
\begin{equation*}
\begin{split}
\hat \Lc_{\hat \Vc^\prime}^\dagger \ket{\omega} =&
\hat \Wc^{\prime^\dagger} \hat \Wc^\dagger \hat \Vc^\dagger \hat \Lc^\dagger \hat \Vc \hat \Wc \hat \Wc^\prime \ket{\omega} \\
=& \sqrt{c_1} \hat \Wc^{\prime^\dagger} \hat \Wc^\dagger \hat \Vc^\dagger \hat \Lc^\dagger \hat \Vc \hat \Wc \ket{\omega} \\
=& \sqrt{c_1} \hat \Wc^{\prime^\dagger} \hat \Wc^\dagger \hat \Vc^\dagger \hat \Lc^\dagger \hat \Vc \ket{\psi_1}
= 0,
\end{split}
\end{equation*}
which concludes the proof.
\qed
\end{pf}

In order to find a condition for CCP preservation, we first prove the following
lemma, which allows us to transform the CCP statement to a different form which does not
involve the $\cdot^\Gamma$ involution.

\begin{lem}
\label{thm:ccptr}
The CCP condition $\Pi_\omega^\perp \hat \Lc^\Gamma \Pi_\omega^\perp \geq 0$ is equivalent to
\begin{equation}
\label{eq:ccptr}
\begin{split}
\tr \left(\hat \Lc \left(x^* - \frac{\tr(x^*)}{n} \one \right) \otimes \left( x - \frac{\tr(x)}{n} \one \right) \right) &\geq 0 
\\ 
& \forall x \in \mathbb{C}^{n \times n} . \nonumber
\end{split}
\end{equation}
\end{lem}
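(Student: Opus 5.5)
The plan is to rewrite the positivity condition on $\Pi_\omega^\perp \hat \Lc^\Gamma \Pi_\omega^\perp$ as a statement about quadratic forms, and then undo the $\Gamma$-involution inside each form. Since $\Pi_\omega^\perp \hat \Lc^\Gamma \Pi_\omega^\perp \geq 0$ means $\bra{v}\Pi_\omega^\perp \hat \Lc^\Gamma \Pi_\omega^\perp\ket{v} \geq 0$ for every $\ket{v}\in\Hc\otimes\Hc$, we can equivalently range over all vectors $\ket{u} = \Pi_\omega^\perp\ket{v}$ in the orthogonal complement of $\ket{\omega}$. We parametrize these by matrices, writing $\ket{v} = \kket{x}$ for $x\in\mathbb{C}^{n\times n}$. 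Because $\bbra{\one}\,\kket{x} = \tr(x)$, we have
\[
\Pi_\omega^\perp \kket{x} = \kket{x} - \tfrac{\tr(x)}{n}\kket{\one} = \kket{x_0}, \qquad x_0 := x - \tfrac{\tr(x)}{n}\one .
\]
As $x$ ranges over $\mathbb{C}^{n\times n}$, $x_0$ ranges over all traceless matrices. So the CCP condition is equivalent to $\bbra{x_0}\hat \Lc^\Gamma \kket{x_0} \geq 0$ for all traceless $x_0$, equivalently for $x_0 = x - \tr(x)\one/n$ with $x$ arbitrary.

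The core step is the identity
\[
\bbra{y}\hat \Xc^\Gamma\kket{y} = \tr\big(\hat \Xc\,(y^*\otimes y)\big)
\]
for every $\hat \Xc$ and $y$, which we verify componentwise. With column-stacking, $\kket{y} = \sum_{ab} y_{ab}\ket{b}\ket{a}$, so $\kket{y}$ has component $y_{ji}$ at $\ket{i}\ket{j}$. Then
\[
\bbra{y}\hat \Xc^\Gamma\kket{y} = \sum_{ijkl} X_{ijkl}\,\overline{(\kket{y})_{ik}}\,(\kket{y})_{jl} = \sum_{ijkl} X_{ijkl}\,\overline{y_{ki}}\,y_{lj},
\]
reading $\hat \Xc^\Gamma$ with row index $(i,j)$ and column index $(k,l)$. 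On the other side,
\[
\tr\big(\hat \Xc\,(y^*\otimes y)\big) = \sum_{ijkl} X_{ijkl}\,(y^*\otimes y)_{kl,ij} = \sum_{ijkl} X_{ijkl}\,\overline{y_{ki}}\,y_{lj},
\]
so the two expressions agree. Here $y^*$ is the entrywise conjugate, consistent with the statement's notation, and $\tr(x^*) = \overline{\tr(x)}$ makes the conjugate of $x_0$ exactly $x^* - \tr(x^*)\one/n$. Substituting $y = x_0$ gives the claimed inequality, and the equivalence holds in both directions because the parametrization covers the entire complement of $\ket{\omega}$.

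The main obstacle is the index bookkeeping in the core identity. Depending on whether one uses row- or column-stacking, and on which tensor factor is conjugated, the right-hand side comes out as $y^*\otimes y$, $y\otimes y^*$, or their transposes. We would pin this down by the explicit component check above (or on rank-one $\hat \Xc = \ketbra{ij}{kl}$) using the paper's column-wise convention. A transpose discrepancy could also be absorbed by the substitution $x\mapsto x^\top$, since $\tr(x^\top) = \tr(x)$ and $x$ is arbitrary, so the final statement is insensitive to that choice.
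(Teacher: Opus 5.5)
Your proof is correct and is essentially the paper's argument. The paper expands $\Pi_\omega^\perp = \one - \ketbra{\omega}{\omega}$ into four quadratic-form terms and evaluates each as a trace, which is exactly the bilinear expansion of your single identity $\bbra{y}\hat\Xc^\Gamma\kket{y} = \tr(\hat\Xc\,(y^*\otimes y))$ at $y = x - \tr(x)\one/n$; the paper parametrizes by ${\rm vec}(x^\dagger)$ rather than ${\rm vec}(x)$ and omits the index check you supply. One small fix: your component formula places $X_{ijkl}$ at row $(i,k)$ and column $(j,l)$, i.e. it uses $\hat\Xc^\Gamma = \sum_{ijkl} X_{ijkl}\ketbra{i}{j}\otimes\ketbra{k}{l}$ (the intended realignment, since the paper's displayed $\ketbra{i}{k}\otimes\ketbra{j}{l} = \ketbra{ij}{kl}$ would make $\Gamma$ the identity), so your phrase ``row index $(i,j)$, column index $(k,l)$'' describes $\hat\Xc$, not $\hat\Xc^\Gamma$.
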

\begin{pf}
By definition of positive definiteness, we have that the condition is satisfied if and only if
$\bra{x} \Pi_\omega^\perp \hat \Lc^\Gamma \Pi_\omega^\perp \ket{x} \geq 0$ for all $\ket{x} \in \mathbb{C}^{n^2}$.
By expanding $\Pi_\omega^\perp = \one - \ketbra{\omega}{\omega}$,
obtain the following four terms from the left-hand side of the inequality,
\begin{equation*}
\begin{split}
\bra{x} \hat \Lc^\Gamma \ket{x} &= \tr(\hat \Lc x^* \otimes x), \\
-\bra{x} \hat \Lc^\Gamma \ketbra{\omega}{\omega} \ket{x} &= -\frac{\tr(x^*)}{n} \tr(\hat \Lc \one \otimes x), \\
-\bra{x} \ketbra{\omega}{\omega} \hat \Lc^\Gamma \ket{x} &= -\frac{\tr(x)}{n} \tr(\hat \Lc x^* \otimes \one), \\
\bra{x} \ketbra{\omega}{\omega} \hat \Lc^\Gamma \ketbra{\omega}{\omega} \ket{x} &= \frac{|\tr(x)|^2}{n^2} \tr(\hat \Lc), \\
\end{split}
\end{equation*}
where we take $\ket{x} = {\rm vec}(x^\dagger)$. Finally, adding these terms together, we obtain \eqref{eq:ccptr}.
\qed
\end{pf}

We can now prove the following proposition, which gives sufficient conditions for the preservation of CCP.

\begin{prop}
\label{thm:positivity}
Given $\hat \Lc$ such that $\Pi_\omega^\perp \hat \Lc^\Gamma \Pi_\omega^\perp \geq 0$. This property is preserved by
$T(\hat \Lc) = {\hat \Vc}^\dagger \hat \Lc {\hat \Vc}$ if
\begin{equation}
{\hat \Vc} = V^* \otimes V.
\end{equation}
\end{prop}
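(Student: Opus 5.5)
The plan is to avoid the $\Gamma$-involution and work with the equivalent trace form of Lemma \ref{thm:ccptr}. We need to show that for every $y \in \mathbb{C}^{r\times r}$,
\[
\tr\Big(\hat\Vc^\dagger \hat\Lc \hat\Vc \,\big(y^* - \tfrac{\tr(y^*)}{r}\one_r\big)\otimes\big(y - \tfrac{\tr(y)}{r}\one_r\big)\Big) \ge 0 .
\]
We assume $V \in {\rm St}(\mathbb{C}^n, r)$, so that $\hat\Vc^\dagger\hat\Vc = \one$. Put $\tilde y = y - \tfrac{\tr(y)}{r}\one_r$; it is traceless, and $\tilde y^*$ is its entrywise conjugate. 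Cyclicity of the trace turns the left-hand side into
\[
\tr\big(\hat\Lc\,\hat\Vc(\tilde y^*\otimes\tilde y)\hat\Vc^\dagger\big).
\]
The product structure $\hat\Vc = V^*\otimes V$ then gives
\[
\hat\Vc(\tilde y^*\otimes \tilde y)\hat\Vc^\dagger = (V\tilde y V^\dagger)^* \otimes (V \tilde y V^\dagger).
\]
Here we use $(V^*)^\dagger = V^\top$, so that $V^*\tilde y^* V^\top = (V\tilde y V^\dagger)^*$.

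Next, set $x = V\tilde y V^\dagger \in \mathbb{C}^{n\times n}$. Its trace is $\tr(x) = \tr(\tilde y V^\dagger V) = \tr(\tilde y) = 0$, so $x - \tfrac{\tr(x)}{n}\one_n = x$, and likewise for $x^*$. The expression therefore equals
\[
\tr\Big(\hat\Lc\,\big(x^* - \tfrac{\tr(x^*)}{n}\one\big)\otimes\big(x - \tfrac{\tr(x)}{n}\one\big)\Big),
\]
which is nonnegative by the hypothesis on $\hat\Lc$ combined with Lemma \ref{thm:ccptr}. Applying Lemma \ref{thm:ccptr} in the other direction at dimension $r$ gives $\Pi_{\omega_r}^\perp (\hat\Vc^\dagger\hat\Lc\hat\Vc)^\Gamma \Pi_{\omega_r}^\perp \ge 0$, which is the claim.

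The step needing most care is the trace bookkeeping. The centering is done with $\one_r$ in the reduced space but with $\one_n$ in the original one. Conjugating $\one_r$ by $V$ gives the projector $VV^\dagger$, not $\one_n$, so the two centerings cannot be matched term by term. I would sidestep this as above, by centering $y$ first: tracelessness passes through the map $y\mapsto VyV^\dagger$, and on traceless arguments both centerings are trivial.

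I would also double-check that the conventions ${\rm vec}(x^\dagger)$ and $x^*\otimes x$ from the proof of Lemma \ref{thm:ccptr} match the ordering $V^*\otimes V$, i.e.\ that conjugation sits on the first tensor factor in both. If the ordering turned out reversed, one would instead use $\tilde y^\top$ or $\tilde y^\dagger$; this would not affect the argument, since the condition in Lemma \ref{thm:ccptr} quantifies over all $x$.
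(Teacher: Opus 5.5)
Your proof is correct and follows the paper's own argument. The paper likewise passes to the trace form of Lemma~\ref{thm:ccptr}, cycles $\hat\Vc = V^*\otimes V$ through the trace, and substitutes $x = V\check x V^\dagger - \frac{\tr(\check x)}{r}VV^\dagger$ (which is exactly your $V\tilde y V^\dagger$) into the hypothesis on $\hat\Lc$. Your explicit check that this $x$ is traceless because $V^\dagger V = \one_r$, so that the $n$-centering is trivial, fills in the step the paper leaves implicit when it says the substitution ``results in the same inequality.''
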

\begin{pf}

We want that
\(
\Pi_\omega^\perp ({\hat \Vc}^\dagger \hat \Lc {\hat \Vc})^\Gamma \Pi_\omega^\perp \geq 0,
\)
i.e. for any $\ket{\check x}$, we need that
\begin{equation*}
\bra{\check x} \Pi_{\check \omega}^\perp ({\hat \Vc}^\dagger \hat \Lc {\hat \Vc})^\Gamma \Pi_{\check \omega}^\perp \ket{\check x} \geq 0,
\end{equation*}
while we know that
\(
\bra{x} \Pi_\omega^\perp \hat \Lc^\Gamma \Pi_\omega^\perp \ket{x} \geq 0,
\)
for any $\ket{x}$. Using Lemma \ref{thm:ccptr} we have that
\begin{equation}
\label{eq:positive_condition}
\tr\left({\hat \Vc}^\dagger \hat \Lc {\hat \Vc} \left( \check x^* - \frac{\tr(\check x^*)}{r} \check \one \right) \otimes \left( \check x - \frac{\tr(\check x)}{r} \check \one \right)\right) \geq 0,
\end{equation}
where
\begin{equation}
\label{eq:positive_known}
\tr \left(\hat \Lc \left(x^* - \frac{\tr(x^*)}{n} \one \right) \otimes \left( x - \frac{\tr(x)}{n} \one \right) \right)
\geq 0.
\end{equation}
Now we substitute ${\hat \Vc} = V^* \otimes V$ into \eqref{eq:positive_condition} and cycle the product inside the trace,
\begin{equation}
\label{eq:positive_ineq}
\begin{split}
\tr \left(\hat \Lc
\left( V \check x V^\dagger - \frac{\tr(\check x)}{r} \check \Pi \right)^* \otimes
\left(V \check x V^\dagger - \frac{\tr(\check x)}{r} \check \Pi \right) \right) \geq 0,
\end{split}
\end{equation}
where $\check \Pi = V V^\dagger$.

We have to prove that the previous inequality is satisfied by every $\check x$.
In order to prove it, we substitute $x = V \check x V^\dagger - \frac{1}{r} \tr(\check x) \check \Pi$
into \eqref{eq:positive_known}. Doing so results in the same inequality given in \eqref{eq:positive_ineq}, concluding the proof.
\qed
\end{pf}

\section{Complete-positivity preserving Oja's flow}
\label{sec:posflow}

As mentioned in Section \ref{sec:oja},
Oja's flow can be derived as the Riemannian gradient ascent flow of the following
cost-like function (not an actual cost function, since it is complex-valued)
on the Stiefel manifold ${\rm St}(\mathbb{C}^{n^2}, r^2)$,
\begin{equation}
J(\hat \Vc) = \frac{1}{2} \tr(A \hat \Vc \hat \Vc^\dagger).
\end{equation}
In this section we intend to develop a ``modified'' version of Oja's
flow that preserves complete positivity, i.e., such that ${\hat \Vc} = V^* \otimes V$ is guaranteed.
This modified flow also has the advantage of greatly reducing the amount of memory necessary to store the matrix $\hat \Vc$,
as only $\check V$ needs to be computed.

Substituting $A = \hat \Lc$ and the condition for $\hat \Vc$ in the $J(\hat \Vc)$ we obtain
a new cost function,
\begin{equation}
\begin{split}
J(V) =& \frac{1}{2} \tr(\hat \Lc (V^* \otimes V) (V^* \otimes V)^\dagger) \\
=& \frac{1}{2} \sum_m |\tr(L_m V V^\dagger)|^2 - r \tr(L_m^\dagger L_m V V^\dagger).
\end{split}
\end{equation}

Even though both $V$ and $\hat \Lc$ are both complex-valued matrices and $\hat \Lc \neq \hat \Lc^\dagger$,
this new functional is an actual (real-valued) cost function, unlike the previous $J(\hat \Vc)$. Therefore,
the gradient ascent of $J(V)$ on ${\rm St}(\mathbb{C}^n, r)$ recovers the optimization problem interpretation,
\begin{equation}
V_\infty = \underset{V}{\mathrm{argmax}} J(V).
\end{equation}
i.e. the steady states $V_\infty$ maximize the trace of the projection $(\Pi^* \otimes \Pi) \hat \Lc (\Pi^* \otimes \Pi)$,
with orthogonal projector $\Pi = V V^\dagger$. As such, the optimal $V_\infty$ is that which provides the
best approximation of the principal components of $\hat \Lc$.

The Riemannian gradient of $J(V)$, using the canonical metric of the Stiefel manifold, is given by
\begin{equation}
\begin{split}
{\rm grad} J(V) =&
\sum_{k} (\one - V V^\dagger)
(- r L_k^\dagger L_k + \\
& \quad \tr(L_k^\dagger V V^\dagger) L_k
+ \tr(L_k V V^\dagger) L_k^\dagger) V.
\end{split}
\end{equation}
Let $\check L_{k, V} = - r L_k^\dagger L_k + \tr(L_k^\dagger V V^\dagger) L_k +
\tr(L_k V V^\dagger) L_k^\dagger$, then the CCP preserving flow is given by

\begin{equation}
\epsilon \dot{V} = \sum_k (\one - V V^\dagger) \check L_{k, V} V,
\end{equation}
where $\epsilon$ is a rate-controlling parameter. The convergence of this flow to
(local) maxima of $J(V)$ is guaranteed by Zoutendijk's theorem \cite{satoRiemannianOptimizationIts2021}.

Notice that, with respect to the method introduced in Section \ref{sec:ojaL},
the memory footprint is reduced by a square factor: where
storing $n^2 \times r^2$ matrices was needed before, now only $n \times r$ matrices are required.
The amount of floating-point operations per iteration is also reduced to
$\Oc(N^{\check{L}_{V}}_\mathrm{nz} (1 + r + r^2) + n (r + r^2))$,
where $N^{\check{L}_{V}}_\mathrm{nz}$ is the amount of nonzero entries in
$\sum_k \check L_{k, V}$.
On the other hand,
while it remains true that a CCP reduced model of the slow or the center manifolds
not always exists, by enlarging the dimension of the reduced model, there is hope
to find a good approximation of the original system, while still obtaining a significant
reduction in its dimensionality. The following subsection provides some insight as to why
this is the case.

\subsection{Reconnecting to adiabatic elimination and DFS}

In this section, we develop some geometric intuition on the behavior of the CP-Oja flow and connect, when possible, the reduced models it produces with the corresponding AE ones.
Let $\Mc_{\alpha}$ be the set of density matrices contained in the span of the eigenoperators
of $\Lc$ whose eigenvalues have real part greater than or equal to $\alpha$, i.e.
\begin{equation}
\Mc_{\alpha} = \mathfrak{D}(\Hc)\cap({\rm span}(\{{\rm eigops_{\lambda_i}} | {\rm Re}(\lambda_i) \geq \alpha\})).
\end{equation}
It is possible to prove that the center manifold for the dynamics is spanned by $\Mc_0$ \cite{wolfQuantumChannelsOperations2012}, which, for Lindbladian dynamics,
has  the following general structure
\cite{baumgartnerAnalysisQuantumSemigroups2008,oreshkovAdiabaticMarkovianDynamics2010}: there exists a Hilbert space decomposition
$
\Hc = \bigoplus_{j=1}^m \Hc_{S,j} \otimes \Hc_{F,j} \oplus \Hc_R
$
such that $\Mc_0$ has the form
\begin{equation*}
 \{U \Big(\bigoplus_{j=1}^m p_j \rho_{S,j} \otimes \tau_{F,j} \oplus 0_R\Big) U^\dagger| 
 \rho_{S,j} \in \mathfrak{D}(\Hc_{S,j}),p\in\mathscr{C}_m\},
\end{equation*} for some $U \in \mathrm{U}(n),$ $m\in\mathbb{N},$ and fixed 
$\tau_{F,j} \in \mathfrak{D}(\Hc_{F,j}),$ with 
$ {\rm rank}(\rho_{F,j}) = {\rm dim}(\Hc_{F,j}),$ $\mathscr{C}_m=\{p\in\mathbb{R}^m|\sum_jp_j=1,\,p_j\geq 0\}.$
Each of the $\Hc_{S,j}$ experiences unitary dynamics, whereas each $\Hc_{F,j}$ supports a single full rank fixed state $\rho_{F,j}$
and $\Hc_R$ is a decaying subspace. If some $\Hc_{F,j}$ has dimension $1$, then the corresponding $\Hc_{S,j}$ is in particular
a DFS \cite{lidarQuantumErrorCorrection2013}.
Clearly, $\Mc_0 \subseteq \mathfrak{D}(\Hc_0)$, and that the equality occurs only if the Hilbert space
can be decomposed into a single DFS and the decaying subspace, i.e. $\Hc = \Hc_S \oplus \Hc_R$.

While the Oja flow tries to find a projection onto $\Mc_{\alpha}$,
the CP-ensuring version does so by means of reductions of the maps onto {\em subspaces of $\Hc$}.
To see this explicitly, take the reduced Lindbladian $\Lc_\Vc$ and apply, without loss of generality,
$\Vc=V^*\otimes V$ from left the and $\Vc^\dagger$ from the right, and define $\Pc_0 = \Vc \Vc^\dagger$.
We then have $\Vc \Lc_\Vc \Vc^\dagger = \Pc_0 \Lc \Pc_0$, where $\Pc_0$ is a projection
of the dynamics onto the subspace $\Hc_V=V V^\dagger \Hc $ of $\Hc$.
In other words, the reduced Lindbladian $\Lc_\Vc = V^\dag \mathcal{L}(V \cdot V^\dag)V$ 
corresponds to a reduction onto the subspace $\textrm{range}(V V^\dag)$.
This means that the resulting CP model reduction is exactly the intended slow dynamics only if
the target slow manifold corresponds to a single a block on $\Hc$, up to change of basis $U$:
more precisely, if $\Mc_\alpha=\mathfrak{D}(\Hc_0)$ and the rank of ${V},$ $r,$ is equal to $\dim(\Hc_0)$.
While the last situation is hard to obtain in practical situations, the reduction obtained with
the CP-Oja flow remains {\em asymptotically exact} if $\Mc_{0}$ has support on a nontrivial
subspace $\Hc_0\subsetneq \Hc,$ and $r \geq {\rm dim}({\rm supp}(\Mc_0))$. In this case, since the dynamics
exponentially converges to the center manifold and thus in particular to its support,
the CP-reduced dynamics become indistinguishable from the adiabatic as well as the exact ones in the long time limit.

In general, by finding the optimal reduction onto a subspace with sum of the reduced spectrum as close to zero as possible, the CP-Oja flow essentially
aims to find an {\em approximate DFS of the system}, providing an alternative numerical way to find approximate codes to the existing ones \cite{wangNumericalMethodFinding2013,casanovaQuantumWallStates2026,ticozziFindingQuantumNoiseless2007,choiMethodFindQuantum2006}.
If a DFS exists, then the projector $V_\infty V_\infty^\dagger$ onto it satisfies
$J(V_\infty) = 0$, since, as already mentioned, in this case $V V^\dagger$ is also a projector onto a subset of $\Mc_0$,
 and therefore the trace in $J(V_\infty)$ would simply sum
the corresponding eigenvalues of $\Lc,$ which lie on the imaginary axis and are either zero or complex-conjugate pairs.
Furthermore, since $\Lc$ has no eigenvalues with positive real part, that is also the maximum of $J(\cdot)$ and a steady
state of the CP-Oja flow.
This can be useful in scenarios where we can write 
$\Lc = \Lc_0 + \varepsilon \Lc_\varepsilon$, such that $\Lc_0$ exhibits a DFS of known dimension
and $\varepsilon > 0$ is small. In this situation, the perturbed dynamics will still admit
a slowly decohering, albeit no longer noise free, subspace.  Then, the CP-Oja flow allows us to
numerically find the projectors $\Pc_0(\cdot) = V V^\dagger \cdot V V^\dagger$ and
$\Qc_0 = \one - \Pc_0$ that project on such a slow subspace.
Finally, we can adapt the error bound \eqref{eq:errorii}
to this case, yielding
$
\|e^{\Lc t} \Pc_0 - \Vc e^{\Lc_{\Vc} t} \Vc^\dagger\|_\mathrm{sop}
\leq \Delta t |\varepsilon| \|\Qc_0 \Lc_\varepsilon \Pc_0\|_\mathrm{sop} + \Oc(\Delta t^2),
$
where $\Vc = V \cdot V^\dagger$.

\section{Numerical test - Central spin model}
\label{sec:simulations}

{\em System of interest:} In order to test the effectiveness of our methods, we consider a central spin $1/2$ model with a dissipative spin bath of $N_b=4$ spins $1/2$.
The Hilbert space of the system is thus $\Hc = \Hc_s \otimes (\bigotimes_i \Hc_{b,i})$, where
$\Hc_s \simeq \mathbb{C}^2$ represents the {\em central spin} and $\Hc_{b,i} \simeq \mathbb{C}^2$ the bath ones. Define the following the spin operators of the central spin:
\begin{equation*}
J_x = \frac{1}{2} \left[\begin{smallmatrix}
0 & 1 \\
1 & 0
\end{smallmatrix}\right] \otimes \one_b,
J_y = \frac{1}{2} \left[\begin{smallmatrix}
0 & -\iu \\
\iu & 0
\end{smallmatrix}\right] \otimes \one_b,
J_z = \frac{1}{2} \left[\begin{smallmatrix}
1 & 0 \\
0 & -1
\end{smallmatrix}\right] \otimes \one_b,
\end{equation*}
whereas those of the bath are given by
\begin{equation*}
\begin{split}
J_{x,i} =& \frac{1}{2} \one_s \otimes \one_2^{\otimes (i - 1)} \otimes \left[\begin{smallmatrix}
0 & 1 \\
1 & 0
\end{smallmatrix}\right] \otimes \one_2^{\otimes (N_b - i)}, \\
J_{y,i} =& \frac{1}{2} \one_s \otimes \one_2^{\otimes (i - 1)} \otimes \left[\begin{smallmatrix}
0 & -\iu \\
\iu & 0
\end{smallmatrix}\right] \otimes \one_2^{\otimes (N_b - i)}, \\
J_{z,i} =& \frac{1}{2} \one_s \otimes \one_2^{\otimes (i - 1)} \otimes \left[\begin{smallmatrix}
1 & 0 \\
0 & -1
\end{smallmatrix}\right] \otimes \one_2^{\otimes (N_b - i)}, \\
J_{\pm,i} =& J_{x,i} \pm \iu J_{y,i}.
\end{split}
\end{equation*}
\noindent {\em Dynamics:} The central spin has the following local Hamiltonian,
\begin{equation*}
H_s = \omega_s J_z,
\end{equation*}
whereas the one of the spin bath is
\begin{equation*}
H_b = \omega_b \sum_i J_{z,i} + \lambda_b \sum_{i,j} J_{x,i} J_{x,j}.
\end{equation*}
Finally, the interaction Hamiltonian is given by
\begin{equation*}
H_{\rm int} = A_x J_x \sum_i J_{x,i} + A_z J_z \sum_i J_{z,i},
\end{equation*}
with the total Hamiltonian being
\(
\label{eq:hamiltonian}
H = H_s + H_b + H_{\rm int}.
\)
As for the noise operators, dissipation is considered on the bath spins only,
$
L_m = J_{+,m}.
$

The following parameters were chosen
$\omega_s = 1.01$,
$A_x = 0.12$, $A_z = 0.03$,
$\omega_b = 1.92$, $\lambda_b = 0.31$, which results in a
generator with right-most eigenvalues given by the first column of Table \ref{tab:central_eig}.
With this choice of parameters, the system exhibits a spectral gap after the fourth eigenvalue.
This is due to the fact that the coupling between the central spin and the bath spins is relatively
weak, compared to the dissipation rate of the bath spins. The presence of this gap means that we can
obtain a good approximation of the dynamics of the central spin by AE.
\begin{table*}[!htpb]
\centering
\caption{Dominant eigenvalues of the generator $\hat \Lc$ of the full model and its reductions $\hat \Lc_{V_\infty} = V^\dagger \hat \Lc V$.}
\label{tab:central_eig}
\begin{tabular}{c|c|c|c|c|c}
\hline
    & Full & CP-Oja ($r = 2$) & CP-Oja ($r = 6$) & CP-Oja ($r = 10$) & Oja's Flow ($r = 2$) \\
\hline
$\lambda_1$ & $0$                    & $0$              & $0$                    & $0$                    & $0$ \\
$\lambda_2$ & $-0.0036 - 1.0673 \iu$ & $0 - 1.0700 \iu$ & $-0.0016 - 1.0682 \iu$ & $-0.0030 - 1.0678 \iu$ & $-0.0036 - 1.0673 \iu$ \\
$\lambda_3$ & $-0.0036 + 1.0673 \iu$ & $0 + 1.0700 \iu$ & $-0.0016 + 1.0682 \iu$ & $-0.0030 + 1.0678 \iu$ & $-0.0036 + 1.0673 \iu$ \\
$\lambda_4$ & $-0.0072$              & $0$              & $-0.0028$              & $-0.0060$              & $-0.0072$ \\
$\lambda_5$ & $-0.5002 - 1.7018 \iu$ &  -               & $-0.4988 - 1.8816 \iu$ & $-0.4975 - 1.7072 \iu$ &  -  \\
$\lambda_6$ & $-0.5002 + 1.7018 \iu$ &  -               & $-0.4988 + 1.8816 \iu$ & $-0.4975 + 1.7072 \iu$ &  -  \\
\hline
\end{tabular}
\end{table*}

\noindent {\em Numerical experiments:} We applied both Oja's flow and our positivity preserving variation to this system, resulting in reduced models
with eigenvalues as given by the rest of the columns of Table \ref{tab:central_eig}. The original Oja flow
perfectly captures the eigenvalues of the slow subspace with a reduced dimension of $2$, essentially
modeling a qubit with a non-Markovian environment. The CP-Oja flow returns a unitarily
evolving reduced model. The approximation provided by the CP-Oja flow improves by increasing the
dimension of the reduced model: Indeed, the right-most eigenvalues of the reduced models approach those of the full model as the dimension $r$ is increased.

In order to simulate the system, we choose the initial state to be a product state $\rho_0 = \rho_s \otimes \rho_b$,
where
$
\rho_b = e^{-H_b \beta} / \tr(e^{-H_b \beta})
$
is a thermal state with inverse temperature $\beta = 0.1$, and $\rho_s = \ketbra{+}{+}$ the
eigenstate of $J_x$ with eigenvalue $+1/2$, i.e. $2 J_x \ket{+} = \ket{+}$.

Figure \ref{fig:central} shows a comparison of the dynamics predicted by the {\em full model} of the system of dimension $n = 32$,
and the reduced models of dimension $r = 2$ given by {\em Oja's flow} and the {\em CP-Oja flow}. The expected value of the spin
along the $X$ axis given by the full model and the non-positive reduced model are identical, whereas the one given
by the positive reduced model experiences no dissipation and oscillates slightly faster. This is consistent with the
eigenvalues given in Table \ref{tab:central_eig}. As for the expectation along the $Z$ axis, the full model and
the non-positive reduced model exhibit differing behaviors for time close to $t = 0$. This is due to the original
state having support outside of the slow subspace. The positive reduced model, instead, predicts no dynamics along the $Z$ axis. This is another indication of the tradeoff between accuracy of the simulation and dimension of the model under the strict requests of maintaining the CP character of the dynamics.

\begin{figure}[ht]
\centering
\subfloat[][Expectation on $X$]{
\resizebox{0.8\linewidth}{!}{\begin{tikzpicture}
    \begin{axis}[%
            width = 1.2\linewidth,
            height = 1.2\linewidth/1.75,
            xmin=0.0,
            xmax=50.0,
            xlabel={Time ($t$)},
            ymin=-1.001,
            ymax=1.001,
            ylabel={X ($\langle \sigma_x \rangle$)},
            axis background/.style={fill=white},
            xmajorgrids,
            ymajorgrids,
            legend style={legend cell align=left, align=left},
            legend pos=south west
            ]
            \addplot [color1, solid, line width=1pt]
            table[x=t, y=Xf]{figures/central.dat};
            \addlegendentry{Full model ($n = 32$)}

            \addplot [color2, loosely dashed, line width=2pt]
            table[x=t, y=Xp]{figures/central.dat};
            \addlegendentry{Positive Oja ($r = 2$)}

            \addplot [color3, dotted, line width=2pt]
            table[x=t, y=Xo]{figures/central.dat};
            \addlegendentry{Oja flow ($r = 2$)}
            
    \end{axis}
\end{tikzpicture}}
\label{fig:central_x}
}\
\subfloat[][Envelope of the expectation on $X$]{
\resizebox{0.8\linewidth}{!}{\begin{tikzpicture}
    \begin{axis}[%
            width = 1.2\linewidth,
            height = 1.2\linewidth/1.75,
            xmin=0.0,
            xmax=1500.0,
            xlabel={Time ($t$)},
            ymin=-0.01,
            ymax=1.01,
            ylabel={X ($\langle \sigma_x \rangle$)},
            axis background/.style={fill=white},
            xmajorgrids,
            ymajorgrids,
            legend style={legend cell align=left, align=left},
            legend pos=north east
            ]
            \addplot [color1, solid, line width=1pt]
            table[x=t, y=Xf]{figures/central_ss.dat};
            \addlegendentry{Full model ($n = 32$)}

            \addplot [color2, loosely dashed, line width=2pt]
            table[x=t, y=Xp]{figures/central_ss.dat};
            \addlegendentry{Positive Oja ($r = 2$)}

            \addplot [color3, dotted, line width=2pt]
            table[x=t, y=Xo]{figures/central_ss.dat};
            \addlegendentry{Oja flow ($r = 2$)}
            
    \end{axis}
\end{tikzpicture}}
\label{fig:central_xss}
}\
\subfloat[][Expectation on $Z$]{
\resizebox{0.8\linewidth}{!}{\begin{tikzpicture}
    \begin{axis}[%
            width = 1.2\linewidth,
            height = 1.2\linewidth/1.75,
            xmin=0.0,
            xmax=1500.0,
            xlabel={Time ($t$)},
            ymin=-0.01,
            ymax=0.85,
            ylabel={Z ($\langle \sigma_z \rangle$)},
            axis background/.style={fill=white},
            xmajorgrids,
            ymajorgrids,
            legend style={legend cell align=left, align=left},
            legend pos=south east
            ]
            \addplot [color1, solid, line width=1pt]
            table[x=t, y=Zf]{figures/central_ss.dat};
            \addlegendentry{Full model ($n = 32$)}

            \addplot [color2, loosely dashed, line width=2pt]
            table[x=t, y=Zp]{figures/central_ss.dat};
            \addlegendentry{Positive Oja ($r = 2$)}

            \addplot [color3, dotted, line width=2pt]
            table[x=t, y=Zo]{figures/central_ss.dat};
            \addlegendentry{Oja flow ($r = 2$)}
            
    \end{axis}
\end{tikzpicture}}
\label{fig:central_zss}
}
\caption{
Comparison between the full model (solid blue line) and the reduced models
obtained by the CP-Oja flow (red dashed line) and the original Oja flow (green dotted line).
}
\label{fig:central}
\end{figure}

{\em Remark:} In this scenario, the CP-Oja flow effectively finds the best 2-dimensional subspace code for the dynamics: it corresponds to
$\Kc_0 = \mathcal{H}_s \otimes {\rm span}\{\ket{0000}_b\}$, i.e. the central spin degrees of freedom tensor the span of the steady state of the dissipative part of the dynamics (recall that the bath spins decay with $J_p$). Notice that this subspace, thanks to its factorized form, can be interpreted as the supporting of a general subsystem encoding the full reduced state of the central spin \cite{knillTheoryQuantumError2000}.

Figure \ref{fig:central_w} shows the same simulation but for positive reduced models with different dimensions $r$. 
For $r > 2$ the reduced models start being able to exhibit a dissipative behavior.
For $r = 10$, the CP-Oja flow converges to $V_\infty$ such that $V_\infty V_\infty^\dagger$
is a projector onto the direct sum of $\Kc_0$ and 
$\Kc_1 = \mathcal{H}_s \otimes {\rm span}\{\ket{0001}_b, \ket{0010}_b, \ket{0100}_b, \ket{1000}_b\}$, with $\Kc_0\oplus\Kc_1=\Hc_S\otimes \Kc,$ an invariant subspace of the dissipative part of the generator that is capable of encoding the reduced state on the central spin.

On the other hand, for $2 < r < 10$, the algorithm finds a $V_\infty$ with support within $\Kc_0 \oplus \Kc_1$,
but such that $V_\infty V_\infty^\dagger$ projects onto $\Kc_0 \oplus \tilde\Kc_1$,
where $\tilde\Kc_1$ is a non-unique subspace of $\Kc_1$ (different initializations
of the CP-Oja flow yield different $\tilde\Kc_1$).
In this case, the increase in dimension, e.g. from $r=2$ to $r = 6$,
allows the CP-Oja flow to better approximate the eigenvalues
of the generator, as shown in Table \ref{tab:central_eig}. However, it also
introduces an error in the expectation of the observable of interest $\langle\sigma_x\rangle$, due to the fact that the subspace $\Kc_0\otimes\tilde \Kc_1$ does not allow for a factorized form as $\Hc_s\otimes \Kc$, failing to encode the  intial reduced state of the center spin and yielding the poorer tracking performance displayed in Figure \ref{fig:central_w}.
We have verified that for  $r \in \{2, 10, 22, 30, 32\}$,
the method projects onto factorized invariant subspaces of the
dissipative part of the dynamics, avoiding such initialization-driven errors. These observations point to $r$ as a key parameter when using the CP-preserving algorithm.

\begin{figure}[ht]
\centering
\subfloat[][Expectation on $X$]{
\resizebox{0.8\linewidth}{!}{\begin{tikzpicture}
    \begin{axis}[%
            width = 1.2\linewidth,
            height = 1.2\linewidth/1.75,
            xmin=0.0,
            xmax=50.0,
            xlabel={Time ($t$)},
            ymin=-1.001,
            ymax=1.001,
            ylabel={X ($\langle \sigma_x \rangle$)},
            axis background/.style={fill=white},
            xmajorgrids,
            ymajorgrids,
            legend style={legend cell align=left, align=left},
            legend pos=south west
            ]
            \addplot [color1, solid, line width=1pt]
            table[x=t, y=Xf]{figures/central_w.dat};
            \addlegendentry{Full model ($n = 32$)}

            \addplot [color2, loosely dashed, line width=2pt]
            table[x=t, y=X1]{figures/central_w.dat};
            \addlegendentry{Positive Oja ($r = 2$)}

            \addplot [color3, dotted, line width=2pt]
            table[x=t, y=X3]{figures/central_w.dat};
            \addlegendentry{Positive Oja ($r = 6$)}
            
            \addplot [color4, dashdotdotted, line width=2pt]
            table[x=t, y=X5]{figures/central_w.dat};
            \addlegendentry{Positive Oja ($r = 10$)}
    \end{axis}
\end{tikzpicture}}
\label{fig:central_xw}
}
\caption{Comparison between the full model (solid blue line) and reduced models obtained
by the CP-Oja flow for different dimensions $r$. Notice that, as dimensions, neither $r = 6$ nor $r = 10$ are
compatible with a simple tracing out of some of the bath spins.}
\label{fig:central_w}
\end{figure}

We now proceed to test the Oja flow on a time-varying system. To this end,
we modify Hamiltonian \eqref{eq:hamiltonian} by adding a Hamiltonian drive along
the $Y$-axis of the central spin with a Gaussian envelope, i.e.
\begin{equation}
\label{eq:hamiltonian_t}
H_d(t) = H + u(t) J_y
\end{equation}
where $u(t) = E_y e^{- (t - \tau_0)^2 / (2 \tau_y^2)}$,
and where $E_y = 1.28$, $\tau_y = 5$ and $\tau_0 = 25$.

Figure \ref{fig:central_t} shows the result of the simulation of a system
with Hamiltonian \eqref{eq:hamiltonian_t} and the same initial state as in
the previous simulations.
When the model reduction matrix is computed only at the start
of the simulation, the error of the
expectation of $Z$ becomes noticeable after $t = 30$.
On the other hand, if the model reduction matrix is re-computed
either at every step of the simulation, or only when the
input $u(t)$ crosses thresholds set at
$\{0.0, 0.3, 0.6, 0.9, 1.2\}$, the error with respect to the
simulation of the full system is negligible.

\begin{figure}[ht]
\centering
\subfloat[][Expectation on Z]{
\resizebox{0.8\linewidth}{!}{\begin{tikzpicture}
    \begin{axis}[%
            width = 1.2\linewidth,
            height = 1.2\linewidth/1.75,
            xmin=0.0,
            xmax=50.0,
            xlabel={Time ($t$)},
            ymin=-0.7,
            ymax=1.01,
            ylabel={Z ($\langle \sigma_z \rangle$)},
            axis background/.style={fill=white},
            xmajorgrids,
            ymajorgrids,
            legend style={legend cell align=left, align=left},
            legend pos=north west
            ]
            \addplot [color1, solid, line width=1pt]
            table[x=t, y=Zf]{figures/central_t.dat};
            \addlegendentry{Full model}

            \addplot [color2, loosely dashed, line width=2pt]
            table[x=t, y=Zp]{figures/central_t.dat};
            \addlegendentry{Stat. Oja}

            \addplot [color3, dotted, line width=2pt]
            table[x=t, y=Zo]{figures/central_t.dat};
            \addlegendentry{Dyn. Oja}

            \addplot [color4, dashdotdotted, line width=2pt]
            table[x=t, y=Zt]{figures/central_t.dat};
            \addlegendentry{Thr. Oja}
    \end{axis}
\end{tikzpicture}}
\label{fig:central_zt}
}
\caption{
Comparison between the full time-dependent model (solid blue line) and the reduced models
obtained by the Oja flow performed at $t=0$ only (red dashed line), the Oja flow performed throughout
the entire simulation (green dotted line), and the Oja flow performed
only when the input signal crossed thresholds set at multiples of $0.3$ (purple dashed and dotted line).
}
\label{fig:central_t}
\end{figure}

The autonomous system simulation took $11\mathrm{s}$
on a commercial laptop, while those of the reduced systems
took $0.4\mathrm{s}$ each. To the latter, one has to add
the pre-processing time dedicated to computing the reduced model,
which was $3\mathrm{s}$ for the Oja flow,
$0.05\mathrm{s}$ for the CP-Oja flow with $r = 2$ and
$0.1\mathrm{s}$ with $r = 10$.
The simulation of the controlled system took 7:03 minutes
for the full system, $36\mathrm{s} + 3\mathrm{s}$ for the
system with static $\Vc$, 3:48 minutes for system with
$\Vc_i$ updated at each step and $34\mathrm{s} + 15\mathrm{s}$ for
the case with $\Vc_i$ updated only when $u(t)$ crossed multiples
of $0.3$.

\section{Conclusions}

In this work, a new approach to finding and approximating the slow dynamics of quantum Markovian dynamics is developed, which bypasses the need for Jordan decomposition of the Markov dynamical generator by finding its optimal {\em slow operator or vector subspace} by integrating a dynamical equation - the Oja flow.
The reduced generator can then be used to obtain asymptotically exact time-traces of the observable of interests, or reduced state dynamics for large systems.
In general, with respect to the slow {\em operator} subspace reduction, the CP-preserving method we present needs a larger reduced model to accurately capture the key features of the dynamics - as expected by the constraints imposed on the optimization variable, which correspond to a reduction of the system underlying {\em vector} subspace.
Yet, we have shown in the numerical examples that it is possible
to obtain a good approximation of the dynamics by enlarging the dimension of the reduced model while still obtaining
a significant reduction of the dimension of the model. The approach has also potential applications to finding approximate DFS for a system, bypassing some numerical difficulties related to Jordan and algebraic decompositions.

In future work, we aim to apply the techniques proposed in this paper
to the design of numerically efficient strategies for quantum optimal control,
for instance, by substituting the dynamical constraints by their reduced version.
The CP-preserving flow may also have promising applications in the field
of quantum information protection, thanks to its connection to approximate DFS.
Recent work done in \cite{ohkiLowRankApproximationKalmanBucy2026}
may also be used to further refine our strategy for time-varying
systems.
We also aim to generalize this approach, defining a modified Oja flow that shifts the optimization from ${\hat \Vc} = V^* \otimes V$ to ${\hat \Vc} = \sum_i^k v_i V_i^* \otimes V_i$ for fixed $k$.
While it may not be easy to ensure a CP-preserving flow, it could lead to an approximation of the Oja flow that is
significantly less memory intensive by, essentially,
truncating $\hat \Vc$ to its $k$ principal components.
Furthermore, we plan to exploit the discrete-time version of Oja's flow to perform model reduction of CPTP maps. An algorithm for this purpose has been recently proposed  in \cite{tsuzukiConvergenceNPM2025}, which converges to the dominant eigensubspace associated with eigenvalues of the largest absolute value, effectively capturing the slow modes of stable linear systems in discrete time. A CP-preserving algorithm for discrete quantum systems could be derived leveraging these results.
Lastly, the proposed methods will be applied to experimentally-motivated, challenging systems and compared more thoroughly, in particular in terms of computational complexity, to the AE methods in perturbative scenarios.

\section*{Acknowledgments}

F. Ticozzi and M. Casanova were supported by the European Union through
NextGenerationEU, within the National Center for HPC, Big Data and
Quantum Computing under Projects CN00000013, CN 1, and Spoke
10. K. Ohki was supported by JSPS KAKENHI Grant Numbers JP21K12097, JP23K26126 and JP26K07550.

\appendix

\section{Derivation of the adiabatic generator}\label{sec:example}
 As a paradigmatic example, consider a system that can be decomposed into two weakly-coupled subsystems, where one
is strongly dissipative with a single steady state $\rho_{Q, \infty}$ and the other either decays much slower.
As such, let the Hilbert space be decomposed as $\Hc = \Hc_P \otimes \Hc_Q$, and
the system's dynamics are governed by the following Lindbladian,
$
\Lc = \Lc_Q + \varepsilon \Lc_P + \varepsilon \Lc_{\rm int},
$
where $\Lc_Q = \one \otimes \tilde{\Lc}_Q$ and $\Lc_P = \tilde{\Lc}_P \otimes \one$.

We now wish to perform a perturbative expansion on $\varepsilon$ in order to find the projector
$\Pc = \sum_{i=0} \varepsilon^i \Pc_i$ onto the slow subspace and the effective Lindbladian
$\Lc_{\rm eff} = \sum_{i=0} \varepsilon^i \Lc_i \Pc_0$, where $\Pc_0(\cdot) = \tr_Q(\cdot) \otimes \rho_{Q, \infty}$
and $\Lc_0 = 0$. Then, substituting the expansions into
$\Lc \Pc \rho = \Pc \Lc_{\rm eff} \rho$ and separating by orders of $\varepsilon$ we have
\begin{equation*}
\begin{split}
\Lc_Q \Pc_1 + \Lc_{\rm int} \Pc_0 + \Lc_{P} \Pc_0 =& \Pc_0 \Lc_1 \\
\Lc_Q \Pc_2 + \Lc_{\rm int} \Pc_1 + \Lc_{P} \Pc_1 =& \Pc_0 \Lc_2 + \Pc_1 \Lc_2.
\end{split}
\end{equation*}
It is easy to verify that $\Pc_0 \Lc_Q = 0$, then we have
$
\Lc_1 = \Pc_0 \Lc_{\rm int} \Pc_0 + \Pc_0 \Lc_P \Pc_0
= \frac{1}{\varepsilon} \Pc_0 \Lc \Pc_0.
$
In order to find $\Pc_i$ and $\Lc_{i+1}$ for $i \geq 1$, inversion of $\Lc_Q$ is necessary.
Explicit formulas for the case when $\Lc_P$ is purely Hamiltonian are given in \cite{rivaExplicitFormulasAdiabatic2024}.
Given the inverse, then we can compute the first order term of the projector $\Pc$ as
$
\Pc_1 = - \frac{1}{\varepsilon} \Lc_Q^{-1} \Lc \Pc_0,
$
and the second order term of the effective generator as
$
\Lc_2 = - \frac{1}{\varepsilon^2} \Pc_0 \Lc \Qc_0 (\Qc_0 \Lc_Q \Qc_0)^{-1} \Qc_0 \Lc \Pc_0,
$
where $\Qc_0 = \one - \Pc_0$. Adding these terms together we have that up to second order the effective generator has
the following form,
\[
\Lc_{\rm eff} = \Pc_0 \Lc \Pc_0 -
\Pc_0 \Lc \Qc_0 (\Qc_0 \Lc_Q \Qc_0)^{-1} \Qc_0 \Lc \Pc_0.
\]

\bibliography{refs}

\end{document}